\theoremstyle{remark}
\newtheorem{lemma}{Lemma}
\newtheorem{remark}{Remark}
\newtheorem{corollary}{Corollary}
\begin{document}

\bstctlcite{IEEEexample:BSTcontrol}

\title{RIS Assisted Device Activity Detection with Statistical Channel State Information}

\author{Friedemann~Laue,~\IEEEmembership{Graduate~Student~Member,~IEEE}, Vahid~Jamali,~\IEEEmembership{Member,~IEEE}, and Robert~Schober,~\IEEEmembership{Fellow,~IEEE}
\thanks{This work was presented in part at IEEE SPAWC 2021 \cite{laue2021irsassistedactive}. \textit{(Corresponding author: Friedemann Laue.)}}
\thanks{F. Laue and R. Schober are with Institute for Digital Communications, Friedrich-Alexander-Universität Erlangen-Nürnberg (FAU), 91058 Erlangen, Germany (e-mail: friedemann.laue@fau.de, robert.schober@fau.de).}%
\thanks{F. Laue is also with Fraunhofer IIS, Fraunhofer Institute for Integrated Circuits IIS, Division Communication Systems (e-mail: friedemann.laue@iis.fraunhofer.de).}%
\thanks{V. Jamali is with Department of Electrical Engineering and Information Technology, Technical University of Darmstadt, 64283 Darmstadt, Germany (e-mail: vahid.jamali@tu-darmstadt.de).}}

\maketitle

\begin{abstract}
  This paper studies \gls{ris} assisted device activity detection for \gls{gf} uplink transmission in wireless communication networks.
  In particular, we consider mobile devices located in an area where the direct link to an \gls{ap} is blocked.
  Thus, the devices try to connect to the \gls{ap} via a reflected link provided by an \gls{ris}.
  Therefore, for the \gls{ris}, a phase-shift design is desired that covers the entire blocked area with a wide reflection beam because the exact locations and times of activity of the devices are unknown in \gls{gf} transmission.
  In order to study the impact of the phase-shift design on the device activity detection at the \gls{ap}, we derive a \gls{glrt} based detector and present an analytical expression for the probability of detection, which is a function of the channel statistics and the phase-shift design.
  Assuming knowledge of statistical \gls{csi}, we formulate an optimization problem for the phase-shift design for maximization of the guaranteed probability of detection for all locations within a given coverage area.
  To tackle the non-convexity of the problem, we propose two different approximations of the objective function and an algorithm based on the \gls{mm} principle.
  The first approximation leads to a design that aims to reduce the variations of the end-to-end channel while taking system parameters such as transmit power, noise power, and probability of false alarm into account.
  The second approximation can be adopted for versatile \gls{ris} deployments because it only depends on the \gls{los} component of the end-to-end channel and is not affected by system parameters.
  For comparison, we also consider a phase-shift design maximizing the average channel gain and a baseline analytical phase-shift design for large blocked areas.
  Our performance evaluation shows that the proposed approximations result in phase-shift designs that guarantee high probability of detection across the coverage area and outperform the baseline designs.
\end{abstract}

\begin{IEEEkeywords}
Device activity detection, reconfigurable intelligent surface, coverage extension, grant-free uplink, statistical channel state information, phase-shift design.
\end{IEEEkeywords}

\glsresetall


\section{Introduction}
\IEEEPARstart{T}{unable} metasurfaces that control the reflection properties of electromagnetic waves were proposed several years ago \cite{kaina2014shapingcomplexmicrowave}.
Recently, this technology has received significant attention in the wireless communications community.
Its ability to create a smart radio environment helps in meeting the high performance requirements of future wireless networks \cite{renzo2019smartradioenvironments}.
To this end, metasurfaces are deployed as controllable passive reflectors, also known as \glspl{ris}.
An \gls{ris} can be modelled as an array of small tunable elements, which are called \emph{unit cells}. While an impinging electromagnetic wave is reflected on the surface of the \gls{ris}, the unit cells modify the characteristics of the wave, e.g., by applying individual phase shifts.
Thereby, the configuration of the unit cells, or the \emph{phase-shift design}, has an impact on the direction of reflection.
The potential of this technology for practical deployments was demonstrated with first \gls{ris} prototypes.
For example, indoor experiments with an \gls{ris} based on \gls{pin} diodes showed an antenna gain of up to \SI{21.7}{\deci\bel}i \cite{dai2020reconfigurableintelligentsurface}. Furthermore, an \gls{ris} employing varactor diodes and more than 1000 unit cells was successfully deployed to transmit a 1080p video stream, where a power consumption of only \SI{1}{\watt} was observed \cite{pei2021risaidedwireless}.

A major challenge in the design of \gls{ris} assisted communication systems is the efficient computation of the phase shifts that maximize the desired system performance metric.
One typical \gls{ris} use case are multi-user wireless communication networks, where an \gls{ris} is deployed to assist multiple users in performing uplink transmission to an \gls{ap} \cite{you2021energyefficiencyspectral,wei2021channelestimationris,liu2021intelligentreflectingsurface,luo2021spatialmodulationris,zhi2021uplinkachievablerate,xu2021risenhancedwpcns,you2021reconfigurableintelligentsurfaces}.
For example, an optimization framework for maximization of both the energy efficiency and the spectral efficiency of \gls{ris} assisted \gls{mimo} uplink transmission was developed in \cite{you2021energyefficiencyspectral}.
This framework was used to provide optimized phase-shift designs while exploiting the instantaneous and statistical \gls{csi} of the cascaded \gls{ris} channel.
Moreover, power minimization in an \gls{ris} assisted multi-user \gls{miso} network was considered in \cite{liu2021intelligentreflectingsurface}.
After presenting a feasibility condition for a guaranteed information rate, an algorithm was developed that accounts for different \gls{qos} constraints for perfect and imperfect \gls{csi}.
In addition, the rate performance of uplink \gls{miso} systems employing imperfect hardware was studied in \cite{zhi2021uplinkachievablerate} assuming perfect \gls{csi}.
While impairments at both the \gls{ap} and the \gls{ris} were considered, it was shown that the rate loss is mainly determined by the hardware limitations of the \gls{ap} for typical numbers of \gls{ris} unit cells.

Whereas the above works assume scheduled uplink transmission and (imperfect) knowledge of the instantaneous \gls{csi}, in this paper, we focus on \gls{gf} uplink transmission where the time of transmission is unknown at the \gls{ap}.
Consequently, device activity detection is required where only statistical \gls{csi} obtained from previous transmissions can be exploited.

\Gls{gf} transmission has been studied extensively in the literature and the main focus has recently been on device activity detection for massive connectivity \cite{liu2018massiveconnectivitymassive,ding2019sparsitylearningbased,shao2020dimensionreductionbased,zhang2021jointactiveuser,jiang2021mlmapdevice}.
For example, the authors of \cite{liu2018massiveconnectivitymassive} analyzed \glsxtrlong{cs} techniques for device activity detection exploiting the sparse activity patterns in massive connectivity.
It was shown that the probabilities of false alarm and missed detection asymptotically go to zero as the number of \gls{ap} antennas goes to infinity.
In addition, the authors of \cite{ding2019sparsitylearningbased} proposed \glsxtrlong{mud} based on sparse user activity without the need for pilot signals in the context of massive connectivity.
Furthermore, the authors of \cite{shao2020dimensionreductionbased} developed an optimization framework for joint device activity detection and channel estimation, where the high complexity caused by large-scale antenna arrays and huge numbers of devices was made tractable by dimension reduction of the feasible space.
Similarly, joint device activity detection and channel estimation was analyzed in \cite{zhang2021jointactiveuser}.
More specifically, two \glsxtrlong{cs} based methods for joint multi-user detection and channel estimation in \gls{gf} \glsxtrlong{noma} systems were presented.
Furthermore, the analysis of \gls{gf} massive access was extended to multi-cell networks in \cite{jiang2021mlmapdevice}. In particular, device activities and interference powers were estimated using both cooperative and noncooperative approaches based on \glsentrylong{ml} and \glsentrylong{map} estimation.

Moreover, there is a small number of works that have investigated \gls{gf} transmission and device activity detection in \gls{ris} assisted networks.
For example, three recent studies focused on \gls{gf} access for \gls{ris} assisted \glsentrylong{iot} networks \cite{guo2021sparseactivitydetection,xia2021reconfigurableintelligentsurface,shao2021bayesiantensorapproach}.
More specifically, assuming \gls{iid} channel coefficients, sparse device activity, and large \glspl{ris}, the authors of \cite{guo2021sparseactivitydetection} showed that the end-to-end channel can be characterized by a complex Gaussian distribution that is independent of the phase-shift design.
This result was adopted to study \glsxtrlong{ce} and device activity detection using a \glsxtrlong{gamp} algorithm.
Similarly, the goal in \cite{xia2021reconfigurableintelligentsurface} was to detect a set of active devices in a massive connectivity scenario and to estimate the \gls{ris} assisted channels.
By exploiting the sparsity of both the channel and the device activity, an \glsxtrlong{amp} based algorithm for joint activity detection and \glsxtrlong{ce} was developed assuming identical phase shifts for all unit cells.
Furthermore, the authors of \cite{shao2021bayesiantensorapproach} focused on unsourced random access, which is an effective random access scheme for massive connectivity. In this context, joint user separation and channel estimation was supported by an \gls{ris} with a random phase-shift configuration.

As is evident from the above discussion, the phase shifts of the \glspl{ris} in \cite{guo2021sparseactivitydetection,xia2021reconfigurableintelligentsurface,shao2021bayesiantensorapproach} were not specifically designed for the problem of device activity detection.
However, the phase-shift design can have a significant impact on the detection performance in practical systems where signals propagate on a limited number of paths.
In this case, it is important to apply a phase-shift design that takes the incident and reflected angles at the \gls{ris} into account \cite{najafi2020physicsbasedmodeling}.
Otherwise, device activity detection at the \gls{ap} may fail because the devices' locations are not covered by the \gls{ris} reflection pattern \cite{popov2021experimentaldemonstrationmmwave}.

To this end, the authors of the recent work \cite{croisfelt2022randomaccessprotocol} proposed a random access protocol that sweeps through a finite set of \gls{ris} configurations. In particular, a downlink phase and an uplink phase are used to estimate the channel and to connect to the \gls{ap}, respectively.
Although this protocol significantly improves the access performance, the model in \cite{croisfelt2022randomaccessprotocol} assumes a simplified setup with \gls{los} links and one-dimensional phase-shift configurations.
In addition, the protocol causes access delays due to the training phase, and device activity detection at the \gls{ap} was not considered in \cite{croisfelt2022randomaccessprotocol}.

In this paper, we study \gls{ris} phase-shift design for \gls{gf} access and device activity detection, and extend the analysis of our previous work \cite{laue2021irsassistedactive}.
In particular, we consider a communication system where the direct link from the \gls{ap} to a pre-defined area is blocked.
Thus, an \gls{ris} is deployed to connect the \gls{ap} with mobile devices located in the blocked area through a reflected link.
In contrast to \cite{guo2021sparseactivitydetection,xia2021reconfigurableintelligentsurface,shao2021bayesiantensorapproach,croisfelt2022randomaccessprotocol}, in this paper, we optimize the phase-shift design to maximize the minimum probability of detection for any device within the blocked area.
Intuitively, since the devices' exact locations and times of transmission are unknown for \gls{gf} access, the challenge is to develop a phase-shift design that covers the entire area with a wide reflection beam.

To this end, we propose a \gls{glrt} based detector and formulate an optimization problem that maximizes the probability of detection across all locations within the blocked area.
Since the optimization problem is non-convex and involves the Marcum Q function,
we reformulate the problem considering two approximations of the Marcum Q function and solve it applying the \gls{mm} principle, where each approximation results in a different phase-shift design.
Finally, we adopt the proposed \gls{glrt} detector to evaluate the detection performance of these designs.
For comparison, we consider an alternative design approach that maximizes the average channel gain and a baseline analytical design.

The main contributions of this paper can be summarized as follows:
\begin{itemize}
  \item We investigate device activity detection for \gls{ris} assisted communication systems using a geometric channel model based on statistical \gls{csi}.
  Our analysis includes both the \gls{los} and \gls{nlos} paths of the device-\gls{ris} channels and takes into account their incident and reflected angles at the \gls{ris}. To this end, we adopt a physics-based \gls{ris} model for characterization of the angle-dependent reflection gain.
  \item Based on the \gls{glrt}, we derive a device activity detector and analyze its performance.
  In particular, we show that the probability of detection can be expressed in terms of the Marcum Q function and it depends on the phase-shift design of the \gls{ris} and the statistics of the channel.
  \item We formulate the phase-shift design as an optimization problem for maximization of the minimum probability of detection across all locations within a given coverage area.
  In order to tackle the non-convexity of the problem, we propose two approximations for the objective function and reformulate them as \glspl{dc} functions.
  This allows us to solve the problem exploiting the \gls{mm} principle and obtain a high-quality suboptimal phase-shift design for each approximation.
  \item We evaluate the detection performance for the proposed design approach and show that our method outperforms existing schemes for \gls{ris} phase-shift design.
  In particular, the proposed design approach improves the probability of device activity detection compared to a design maximizing the average channel gain \cite{shen2021beamformingoptimizationirs,xing2021achievablerateanalysis,xing2021locationawarebeamforming} and an analytical phase-shift design for large coverage areas \cite{laue2021irsassistedactive,jamali2021powerefficiencyoverhead}.
\end{itemize}

Different from its conference version \cite{laue2021irsassistedactive}, this paper takes both the \gls{los} and \gls{nlos} paths of the device-\gls{ris} channels into account.
As a result, the detection performance cannot be expressed as a monotonic function of the \gls{los} channel gain as in \cite{laue2021irsassistedactive}, i.e., a different approach for phase-shift design is required.
Similar to \cite{laue2021irsassistedactive}, the design proposed in this work is based on a non-convex optimization problem. However, instead of resolving the non-convexity of the problem using semidefinite relaxation and Gaussian randomization, we rewrite the non-convex terms in \gls{dc} form and apply the \gls{mm} principle.
In addition, we investigate the impact of different transmit powers, scattering strengths, incident angles at the \gls{ris}, and area sizes. We also study the reflection patterns of the optimized phase-shift designs.

The remainder of this paper is organized as follows.
Section \ref{sec:system-model} introduces the system and channel models.
The \gls{glrt} for the considered device activity detection problem is derived in Section \ref{sec:active-device-detection}. We formulate the optimization problem for the phase-shift design in Section \ref{sec:phase-shift-design}, and present the \gls{mm} based solution in Section \ref{sec:successive-convex-approximation}.
The performance of the proposed phase-shift designs is evaluated in Section \ref{sec:performance-evaluation}, and conclusions are drawn in Section \ref{sec:conclusion}.

\emph{Notations}: $\tr{\cdot}$, $(\cdot)^T$, $(\cdot)^H$, $(\cdot)^{-1}$, and $\E{\cdot}$ denote the trace, transpose, conjugate transpose, inverse, and expectation operations, respectively.
$\norm{\cdot}$ refers to the Euclidean norm of a vector and $\abs*{\cdot}$ represents the absolute value of a scalar.
The nuclear norm and spectral norm of a matrix are denoted by $\norm{\cdot}_*$ and $\norm{\cdot}_2$, respectively.
Given a function $f(\mt{X})$ with matrix argument $\mt{X}$, $\nabla_{\mt{X}}f$ represents the gradient of $f(\mt{X})$.
The element in the $m$th row and $n$th column of a matrix and the $n$th element of a vector are denoted by $\left[\,\cdot\,\right]_{m\times n}$ and $\left[\,\cdot\,\right]_{n}$, respectively.
The determinant of matrix $\mt{X}$ is denoted by $\det(\mt{X})$, the identity matrix is denoted by $\mt{I}$, and a positive semidefinite matrix $\mt{X}$ is denoted by $\mt{X}\succeq 0$.
$\mathbb{R}$ and $\mathbb{C}$ represent the sets of real and complex numbers, respectively.
$\vt{x}\sim\mathcal{CN}(\vt{a}, \mt{B})$ refers to a complex normally distributed random vector $\vt{x}$ with mean vector $\vt{a}$ and covariance matrix $\mt{B}$.
$\chi_a^2(b)$ represents a noncentral chi-squared distribution with $a$ degrees of freedom and noncentrality parameter $b$.
Probability is denoted by $\Pr\{\cdot\}$.
The operator $\floor{x}$ represents the largest integer less than or equal to $x$ and the remainder of the division $a/b$ is denoted by $a\pmod{b}$.
The exponential function $e^{(\cdot)}$ is also written as $\exp(\cdot)$.

\section{System Model}\label{sec:system-model}
\noindent In this section, we present the considered communication network, where the direct link from the \gls{ap} to a specific area is blocked.
As illustrated in Fig.~\ref{fig:illustration}, mobile devices located within the blocked area connect to the \gls{ap} via the reflected link of the \gls{ris}.
In the following, we introduce the coordinate system for the communication network and describe the coverage area, the \gls{ris}, and the \gls{ap} in more detail. Additionally, we present the adopted channel model.
\begin{figure}
  \centerline{\includegraphics[width=\columnwidth]{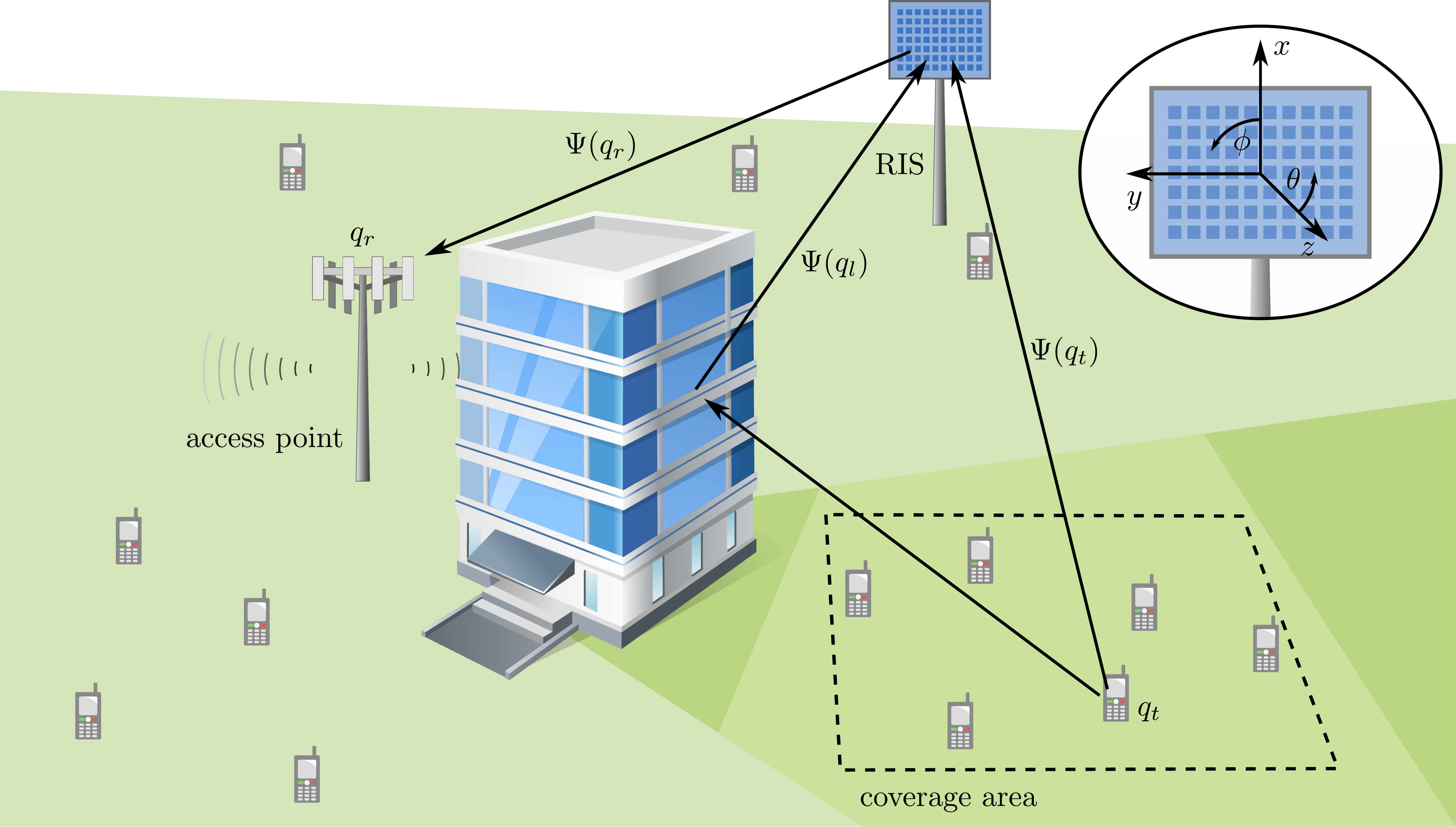}}
  \caption{Illustration of the considered communication setup for device activity detection.}
  \label{fig:illustration}
\end{figure}

\subsection{Coordinate System}
\noindent We place the origin of the coordinate system at the center of the \gls{ris} and refer to a location in three-dimensional space as $q$.
Location $q$ is specified by Cartesian coordinates $\mathsf{cart}(q)=(x_q, y_q, z_q)$ or spherical coordinates $\mathsf{sphe}(q) = (d_q, \theta_q, \phi_q)$, where $d_q$, $\theta_q$, and $\phi_q$ denote the distance from the origin to $q$, the elevation angle of $q$, and the azimuth angle of $q$, respectively.
Further, we denote the direction from the origin to location $q$ as $\Psi(q) = (\theta_q, \phi_q)$.

\subsection{Coverage Area and Transmitters}
\noindent The actual coverage area is an arbitrary area in continuous three-dimensional space, e.g., an area in the $y$-$z$ plane. For tractability, we take $Q$ samples of the continuous area and define the approximated coverage area as the discrete set of locations $\mathcal{Q}=\left\{q_{t,1}, q_{t,2}, \dots,q_{t,Q}\right\}$. The number of samples can be chosen according to the desired approximation accuracy \cite{laue2021irsassistedactive}.
For simplicity, we assume a rectangular coverage area with center coordinates $(c_x, c_y, c_z)$, length $D_z$ in $z$-direction, and width $D_y$ in $y$-direction.

The location of a device is denoted as $q_t$, where $q_t\in\mathcal{Q}$ holds for devices located within the coverage area.
We consider single-antenna devices that try to connect to the \gls{ap} by transmitting a preamble sequence, which is defined by symbol vector $\vt{s}\in\mathbb{C}^S$ of length $S$. We assume normalized symbols, i.e., $\abs*{\left[\vt{s}\right]_s}^2 = 1, \forall\ s\in\{1,\dots,S\}$.

\subsection{Access Point}\label{sec:accesspoint}
\noindent We consider an $M$-antenna \gls{ap} deployed at a fixed and known location $q_r$.
Furthermore, we assume a linear receiver with coherent combining and a \gls{los} link\footnote{
In \gls{ris} assisted networks, it is usually desired that the \gls{ris} is deployed with dominant \gls{los} to the \gls{ap} in order to achieve the maximum received power \cite{wu2018intelligentreflectingsurface,wu2020towardssmartreconfigurable}. Thus, similar to \cite{wu2018intelligentreflectingsurface,wu2020towardssmartreconfigurable,shen2021beamformingoptimizationirs,xing2021achievablerateanalysis,croisfelt2022randomaccessprotocol}, we assume that in the \gls{ap}-\gls{ris} channel the impact of fading is negligible and consider only the \gls{los} path.}to the \gls{ris}.
Thus, the \gls{ap} can be modelled as an equivalent single-antenna receiver with beamforming gain $M$.
The transmitted signal of an active device is received via the \gls{ris} at the \gls{ap} as\footnote{For the phase-shift design of the \gls{ris}, we assume that only one device is active at a time.
However, the phase-shift designs proposed in this paper provide wide reflection beams that cover a given area, i.e., these designs extend the coverage of the \gls{ap}.
Thus, the designs can be straightforwardly combined with existing schemes for \glsxtrlong{mud}, \glsxtrlong{ce}, and data transmission designed for non-\gls{ris} assisted multi-user networks.}
\begin{equation}\label{eq:rx-signal}
  \vt{x} = h\sqrt{MP_\text{tx}}\vt{s} + \vt{n},
\end{equation}
where $h\in\mathbb{C}$, $P_\text{tx}\in\mathbb{R}$, and $\vt{n}\in\mathbb{C}^{S}$ denote the end-to-end channel gain, the transmit power, and additive Gaussian noise, respectively.
The noise is modelled as $\vt{n}\sim\CN{\vt{0}, \sigma^2\mt{I}}$ with noise power $\sigma^2$.
Note that $h$ depends on the phase-shift design and the channel model, which is presented in Section \ref{sec:channel-model}.

\subsection{Reconfigurable Intelligent Surface}
\noindent It has been shown that the reflection gain of the \gls{ris} depends on the \glspl{aoa} and \glspl{aod} of the incident and reflected waves, respectively \cite{najafi2020physicsbasedmodeling,pei2021risaidedwireless,wang2021receivedpowermodel}.
Therefore, in order to correctly model the reflection gain for the different device locations, we adopt the physics-based \gls{ris} model from \cite{najafi2020physicsbasedmodeling}.
In particular, we assume that the \gls{ris} is centered at the origin of the coordinate system and is modelled as a \gls{upa} with $U_x U_y = U$ unit cells in the $x$-$y$ plane. The location of unit cell $u \in\{0, 1, \dots, U-1\}$ is given by the coordinate vector
\begin{equation}
  \vt{c}_u = \begin{bmatrix}d_x u_x & d_y u_y & 0\end{bmatrix}^T,
\end{equation}
where $d_x$ and $d_y$ denote the unit cell spacing along the $x$ and $y$ axis, respectively, and
\begin{align}
  u_x &= u\pmod{U_x} - U_x/2 + 1\\
  u_y &= \floor{u/U_x} - U_y/2 + 1.
\end{align}
Here, we assume that $U_x$ and $U_y$ are even numbers.
The response function of the \gls{ris} is given by \cite{najafi2020physicsbasedmodeling}
\begin{equation}\label{eq:irs-response}
  g(\Psi(q_r), \Psi(q_t)) = \frac{\sqrt{4\pi}}{\lambda} c(\Psi(q_r), \Psi(q_t)) \vt{w}^H \vt{a}(\Psi(q_r), \Psi(q_t)),
\end{equation}
where $\lambda\in\mathbb{R}, c(\Psi(q_r), \Psi(q_t))\in\mathbb{C}$, $\vt{w}\in\mathbb{C}^U$, and $\vt{a}(\Psi(q_r), \Psi(q_t))\in\mathbb{C}^U$ denote the wavelength, the unit cell factor, the phase-shift vector, and the array response, respectively.
Defining
\begin{equation}
  \vt{k}(\Psi(q)) = \frac{2\pi}{\lambda}
    \begin{bmatrix}
      \sin(\theta_q)\cos(\phi_q) & \sin(\theta_q)\sin(\phi_q) & \cos(\theta_q)
    \end{bmatrix}^T
\end{equation}
and $\vt{k}(q_r, q_t) = \vt{k}(\Psi(q_r)) + \vt{k}(\Psi(q_t))$, the array response can be expressed as
\begin{equation}
  \vt{a}(\Psi(q_r), \Psi(q_t)) =
  \begin{bmatrix}
    e^{j\vt{k}^T(q_r, q_t) \vt{c}_0}
    & e^{j\vt{k}^T(q_r, q_t) \vt{c}_1}
    & \dots
    & e^{j\vt{k}^T(q_r, q_t) \vt{c}_{U-1}}\end{bmatrix}^T.
\end{equation}
The $u$th element of $\vt{w}$ is given by $e^{j\omega_u}$, where $\omega_u$ denotes the phase shift applied by the $u$th unit cell.
Unit cell factor $c(\Psi(q_r), \Psi(q_t))$ describes the amplitude of the reflection coefficient of each unit cell, which depends on the incident and reflected angles, the polarization of the incident wave, and the physical realization of the unit cells.
For example, using the physics-based model of \cite{najafi2020physicsbasedmodeling}, the unit cell factor is given by
\begin{multline}\label{eq:unitcellfactor}
  c(\Psi(q_r), \Psi(q_t)) = \frac{j\sqrt{4\pi}d_xd_y}{\lambda}\\
    \times
    \cos(\theta_{q_t})\frac{
      \norm*{
        \begin{bmatrix}
          \cos(\varphi_{q_t})\cos(\theta_{q_r})\sin(\phi_{q_r})
          - \sin(\varphi_{q_t})\cos(\theta_{q_r})\cos(\phi_{q_r})\\
          \sin(\varphi_{q_t})\sin(\phi_{q_r}) + \cos(\varphi_{q_t})\cos(\phi_{q_r})
        \end{bmatrix}
      }_2
    }{
      \sqrt{\left(
        \cos(\varphi_{q_t})\sin(\theta_{q_t})\cos(\phi_{q_t})
        + \sin(\varphi_{q_t})\sin(\theta_{q_t})\sin(\phi_{q_t})
      \right)^2 + \cos^2(\theta_{q_t})}
    },
\end{multline}
where $\varphi_{q_t}$ denotes the polarization of the incident wave originating from location $q_t$.

\subsection{Channel Model}\label{sec:channel-model}
\noindent We assume that the direct link between the coverage area and the \gls{ap} is blocked, cf. Fig.~\ref{fig:illustration}.
Therefore, the devices have to connect with the \gls{ap} via the reflected link, which comprises the device-\gls{ris} and the \gls{ris}-\gls{ap} links.
As stated in Section \ref{sec:accesspoint}, the \gls{ris}-\gls{ap} link is dominated by the \gls{los} path.
For the device-\gls{ris} link, the scattered paths cannot be neglected because the devices are located close to the ground \cite{rao2014distributedcompressivecsit,zhi2021reconfigurableintelligentsurface}.
Nevertheless, we assume a limited number of clusters of scatterers \cite{najafi2020physicsbasedmodeling}.
Furthermore, the clusters' incident directions at the \gls{ris} are fixed and equal for all locations within the coverage area \cite{rao2014distributedcompressivecsit}.
Therefore, the channel in \eqref{eq:rx-signal} can be modelled as follows
\begin{equation}\label{eq:channel}
  h = \bar{h}(q_r)
    \left(
      \bar{h}(q_t) g(\Psi(q_r), \Psi(q_t))
      + \sum_{l=1}^L \eta_l(q_t) g(\Psi(q_r), \Psi(q_l))
    \right),
\end{equation}
where $L$, $\Psi(q_l)$, and $\eta_l(q_t)\sim\CN{0, \sigma_l^2(q_t)}$ denote the number of clusters, the incident direction of the $l$th cluster on the \gls{ris}, and the small-scale fading with variance $\sigma_l^2(q_t)$ caused by the $l$th cluster, respectively, and $\bar{h}(q)$ denotes the \gls{los} channel coefficient between the \gls{ris} and location $q$.
Note that the variances of the small-scale fading may change for different device locations, i.e., variance $\sigma_l^2(q_t)$ depends on $q_t$.
Also, similar to \cite{yu2020designanalysisoptimization,xu2020resourceallocationirs,xia2021reconfigurableintelligentsurface,shao2021bayesiantensorapproach}, we assume narrow-band communication, i.e., the delays between different propagation paths are not resolvable.
The channel gain $\bar{h}(q)$ can be written as $\bar{h}(q)=\abs*{\bar{h}(q)} e^{j\varphi(q)}$, where magnitude $\abs*{\bar{h}(q)} = \lambda/(4\pi d_q)$ is given by the free-space path loss and depends on wavelength $\lambda$ and distance $d_q$ between location $q$ and the \gls{ris}.
The phase shift $\varphi(q)=2\pi d_q/\lambda$ depends on the wavelength and the distance as well.
\begin{remark}\label{rm:known-statistics}
  In the following, we assume that $\Psi(q_l)$ and $\sigma_l^2(q_t), \forall\ l\in\{1,\dots,L\}$, are known based on channel measurements for previous transmissions \cite{wang2019superresolutionchannel}.
\end{remark}
\begin{remark}\label{rm:unknown-phases}
  We note that, in practice, the accuracy of the (measured) distance $d_q$ is limited and may not be in the order of the wavelength. As a result, we assume that the phase terms $e^{j\varphi(q_r)}$ and $e^{j\varphi(q_t)}$ are unknown \cite{laue2021irsassistedactive}.
  In contrast, small deviations of $d_q$ do not have a significant impact on the channel magnitude. Thus, we assume that $\abs*{\bar{h}(q)}$ is known.
\end{remark}
Substituting \eqref{eq:irs-response} in \eqref{eq:channel} allows us to rewrite the channel coefficient as follows
\begin{equation}\label{eq:channel-as-inner}
    h = \vt{w}^H \left( \vt{h}^{\text{LoS}} + \vt{h}^{\text{NLoS}} \right),
\end{equation}
where
\begin{subequations}\label{eq:path-components}
  \begin{align}
    \vt{h}^{\text{LoS}}&= \frac{\sqrt{4\pi}}{\lambda} \bar{h}(q_r) \bar{h}(q_t)
      c(\Psi(q_r), \Psi(q_t)) \vt{a}(\Psi(q_r), \Psi(q_t))\label{eq:los-component}\\
    \vt{h}^{\text{NLoS}}&= \frac{\sqrt{4\pi}}{\lambda} \bar{h}(q_r)
      \sum_{l=1}^L \eta_l c(\Psi(q_r), \Psi(q_l)) \vt{a}(\Psi(q_r), \Psi(q_l)).\label{eq:nlos-component}
  \end{align}
\end{subequations}
Based on \eqref{eq:channel-as-inner}, we note that phase-shift vector $\vt{w}$ has an impact on the end-to-end channel $h$. Moreover, we conclude that $h$ can be modelled as a complex Gaussian random variable with $h\sim\CN{\vt{w}^H\vt{h}^{\text{LoS}}, \vt{w}^H\mt{C}\vt{w}}$, where
\begin{equation}\label{eq:covariance}
  \mt{C} = \frac{4\pi}{\lambda^2} \abs*{\bar{h}(q_r)}^2
    \sum_{l=1}^L \sigma_l^2(q_t) \abs*{c(\Psi(q_r), \Psi(q_l))}^2 \vt{a}(\Psi(q_r), \Psi(q_l)) \vt{a}^H(\Psi(q_r), \Psi(q_l)).
\end{equation}

\section{Device Activity Detection}\label{sec:active-device-detection}
\noindent In this section, we consider the problem of device activity detection and derive analytical expressions for the detection performance.
In general, for a given desired probability of false alarm, the maximum probability of detection of an active device is obtained with the \gls{lrt} \cite{kay1998fundamentalsstatisticalsignal}.
However, the \gls{lrt} requires full knowledge of the distribution of the end-to-end channel.
We observe from \eqref{eq:path-components} and \eqref{eq:covariance} that the variance of the channel is known, but the mean value depends on the unknown phase terms of the \gls{los} paths, cf. Remark~\ref{rm:unknown-phases}.
Therefore, we use estimates of the unknown terms for the derivation of the detector, which results in a \gls{glrt} \cite{kay1998fundamentalsstatisticalsignal}.

For derivation of the proposed detector, we exploit the following lemma.
\begin{lemma}\label{lm:mle}
  For $\vt{x},\vt{y} \in \mathbb{C}^N$, $z, \gamma \in \mathbb{C}$, $c_1,c_2 \in \mathbb{R}\setminus\{0\}$, and $\mt{C}=c_1\vt{y}\vt{y}^H + c_2\mt{I}$, the following identity holds:
  \begin{equation}\label{eq:lemma-mle}
      \min_\gamma\ (\vt{x}-\vt{y}z\gamma)^H \mt{C}^{-1} (\vt{x}-\vt{y}z\gamma)
      = (\vt{x}-\frac{1}{\vt{y}^H\vt{y}}\vt{y}\vt{y}^H\vt{x})^H \mt{C}^{-1} (\vt{x}-\frac{1}{\vt{y}^H\vt{y}}\vt{y}\vt{y}^H\vt{x}).
  \end{equation}
\end{lemma}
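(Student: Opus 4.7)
The plan is to solve the unconstrained scalar optimization over $\gamma$ directly and then exploit the rank-one-plus-identity structure of $\mt{C}$ to simplify the optimum to the form on the right-hand side of \eqref{eq:lemma-mle}. Throughout, I treat $z$ as a fixed (nonzero) scalar so that substituting $\tilde\gamma = z\gamma$ turns the left-hand side into $\min_{\tilde\gamma}(\vt{x}-\vt{y}\tilde\gamma)^H \mt{C}^{-1}(\vt{x}-\vt{y}\tilde\gamma)$, a standard complex quadratic form. Since $\mt{C}$ is positive definite (as $c_1\vt{y}\vt{y}^H\succeq 0$ plus $c_2\mt{I}$ with $c_2>0$; the case $c_2<0$ can be handled by the same algebra provided $\mt{C}$ is invertible), the objective is convex in $\tilde\gamma$, and a single Wirtinger-derivative calculation $\partial/\partial\tilde\gamma^{*} = 0$ yields the unique minimizer
\begin{equation*}
  \tilde\gamma^{\star} = \frac{\vt{y}^H\mt{C}^{-1}\vt{x}}{\vt{y}^H\mt{C}^{-1}\vt{y}}.
\end{equation*}

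The second step — and the heart of the lemma — is to show that this minimizer simplifies to $\tilde\gamma^{\star} = \vt{y}^H\vt{x}/(\vt{y}^H\vt{y})$ thanks to the special structure $\mt{C}=c_1\vt{y}\vt{y}^H+c_2\mt{I}$. The key observation is that $\vt{y}$ is an eigenvector of $\mt{C}$: indeed, $\mt{C}\vt{y} = (c_1\vt{y}^H\vt{y}+c_2)\vt{y}$, so $\mt{C}^{-1}\vt{y} = \mu\vt{y}$ with $\mu = (c_1\vt{y}^H\vt{y}+c_2)^{-1}$. Substituting this into the numerator and denominator of $\tilde\gamma^{\star}$ cancels the common factor $\mu$ and leaves $\tilde\gamma^{\star} = \vt{y}^H\vt{x}/(\vt{y}^H\vt{y})$, which is exactly what is needed so that $\vt{y}z\gamma^{\star} = \vt{y}\tilde\gamma^{\star} = (\vt{y}^H\vt{y})^{-1}\vt{y}\vt{y}^H\vt{x}$. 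Plugging this back into the objective yields the right-hand side of \eqref{eq:lemma-mle}.

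The only real obstacle is recognizing that one should not attempt a brute-force inversion (e.g., via Sherman–Morrison) of $\mt{C}$ — the resulting expressions for $\vt{y}^H\mt{C}^{-1}\vt{x}$ and $\vt{y}^H\mt{C}^{-1}\vt{y}$ would be cumbersome, though they collapse to the same answer. The eigenvector viewpoint bypasses this and makes the cancellation immediate. A minor detail worth remarking on is the degenerate case $z=0$: then the left-hand side is simply $\vt{x}^H\mt{C}^{-1}\vt{x}$ independent of $\gamma$, while the right-hand side is the objective evaluated at the orthogonal projection of $\vt{x}$ onto $\vt{y}$; these are equal only when one interprets the minimization as a limit, so the lemma is effectively stated for the relevant case $z\neq 0$ arising in the GLRT derivation.
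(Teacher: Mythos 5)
Your proof is correct and reaches the same minimizer as the paper, but the route through the key algebraic step is genuinely different. The paper frames the minimization as a Gaussian MLE problem (citing the standard estimator $\hat{\vt{\theta}}=(\mt{H}^H\mt{C}^{-1}\mt{H})^{-1}\mt{H}^H\mt{C}^{-1}\vt{x}$ with $\mt{H}=\vt{y}z$) and then explicitly inverts $\mt{C}=c_1\vt{y}\vt{y}^H+c_2\mt{I}$ via the Woodbury identity, $\mt{C}^{-1}=\tfrac{1}{c_2}\bigl(\mt{I}-\tfrac{c_1\vt{y}\vt{y}^H}{c_2+c_1\vt{y}^H\vt{y}}\bigr)$, before substituting into the ratio $\vt{y}^H\mt{C}^{-1}\vt{x}/(\vt{y}^H\mt{C}^{-1}\vt{y})$. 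You instead derive the same ratio by direct Wirtinger calculus and then observe that $\vt{y}$ is an eigenvector of $\mt{C}$, so $\mt{C}^{-1}\vt{y}=\mu\vt{y}$ with $\mu=(c_1\vt{y}^H\vt{y}+c_2)^{-1}$ real, and the factor $\mu$ cancels in numerator and denominator. Your eigenvector argument is cleaner — it never requires writing down $\mt{C}^{-1}$ — while the paper's Woodbury computation is reused later (its Eq.\ \eqref{eq:invcov} is needed again to simplify the detection metric to the correlator form), which is presumably why the authors take that route. Your side remarks are also apt: the identity as stated does implicitly require $z\neq 0$ (the paper's $\hat{\gamma}=\tfrac{1}{z}\cdots$ silently assumes this), and positive definiteness of $\mt{C}$ (which holds in the application, where $c_1,c_2>0$) is what turns the stationary point into a genuine minimum.
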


\begin{proof}
  Given $\mt{H}\in\mathbb{C}^{N\times p}$ and $\mt{C}\in\mathbb{C}^{N\times N}$, let $\vt{x}\sim\CN{\mt{H}\vt{\theta}, \mt{C}}$ with the unknown vector $\vt{\theta}\in\mathbb{C}^p$.
  Then, the \gls{mle} of $\vt{\theta}$ is given by \cite[Chapter 15]{kay1993fundamentalsstatisticalprocessing}
  \begin{equation}
    \hat{\vt{\theta}} = \left(\mt{H}^H\mt{C}^{-1}\mt{H}\right)^{-1}\mt{H}^H\mt{C}^{-1}\vt{x}.
  \end{equation}
  The left-hand side of \eqref{eq:lemma-mle} is equivalent to finding the \gls{mle} of $\gamma$ for $\vt{x}\sim\CN{\vt{y}z\gamma, \mt{C}}$. Hence, with $\mt{H}=\vt{y}z$ and $\vt{\theta}=\gamma$, we obtain the \gls{mle}
  \begin{equation}\label{eq:mle}
    \hat{\gamma} = \frac{1}{z}\frac{\vt{y}^H\mt{C}^{-1}\vt{x}}{\vt{y}^H\mt{C}^{-1}\vt{y}}.
  \end{equation}
  The inverse of $\mt{C}=c_1\vt{y}\vt{y}^H + c_2\mt{I}$ is found with the Woodbury matrix identity (matrix inversion lemma) as
  \begin{equation}\label{eq:invcov}
    \mt{C}^{-1} = \frac{1}{c_2}\left(\mt{I} - \frac{c_1\vt{y}\vt{y}^H}{c_2 + c_1\vt{y}^H\vt{y}}\right).
  \end{equation}
  Substituting \eqref{eq:invcov} in \eqref{eq:mle} results in
  \begin{equation}\label{eq:gamma-mle}
    \hat{\gamma} = \frac{1}{z}\frac{\vt{y}^H\vt{x}}{\vt{y}^H\vt{y}},
  \end{equation}
  which gives the value for $\gamma$ that minimizes the left-hand side of \eqref{eq:lemma-mle}. The right-hand side is obtained by substituting \eqref{eq:gamma-mle} for $\gamma$ in \eqref{eq:lemma-mle}.
\end{proof}

\subsection{Detection Problem}
\noindent Device activity detection at the \gls{ap} can be described as a binary hypothesis test problem\cite{kay1998fundamentalsstatisticalsignal}.
More specifically, we define hypothesis $\mathcal{H}_0$ for an inactive device and hypothesis $\mathcal{H}_1$ for an active device.
Hence, received signal $\vt{x}$ can be written as
\begin{subequations}\label{eq:hypotheses}
  \begin{align}
    \mathcal{H}_0&: \vt{x} = \vt{n}\label{eq:h0}\\
    \mathcal{H}_1&: \vt{x} = h\sqrt{P}\bar{\vt{s}} + \vt{n},\label{eq:h1}
  \end{align}
\end{subequations}
where \eqref{eq:h0} contains noise only and \eqref{eq:h1} follows from \eqref{eq:rx-signal} with $P=MP_\text{tx}S$ and $\bar{\vt{s}}=\sqrt{\frac{1}{S}}\vt{s}$.
Here, we normalize $\vt{s}$ to obtain $\norm{\bar{\vt{s}}}^2=1$, which simplifies the derivations in the following sections.
Using the channel model in \eqref{eq:channel-as-inner}, we obtain
\begin{subequations}\label{eq:hypotheses-distribution}
  \begin{align}
    \mathcal{H}_0&: \vt{x} \sim \CN{0, \mt{C}_0}\\
    \mathcal{H}_1&: \vt{x} \sim \CN{\vt{\mu}(\gamma), \mt{C}_1},
  \end{align}
\end{subequations}
where
\begin{align}
  \gamma &=e^{j(\varphi(q_r) + \varphi(q_t))}\label{eq:unknowns}\\
  \vt{\mu}(\gamma)&= \vt{w}^H \frac{\sqrt{4\pi}}{\lambda} \abs*{\bar{h}(q_r)}\abs*{\bar{h}(q_t)} \gamma
    c(\Psi(q_r), \Psi(q_t)) \vt{a}(\Psi(q_r), \Psi(q_t))
    \sqrt{P}\bar{\vt{s}}\label{eq:mean}\\
  \mt{C}_0&= \sigma^2\mt{I}\\
  \mt{C}_1&= \vt{w}^H\mt{C}\vt{w}P\bar{\vt{s}}\bar{\vt{s}}^H + \sigma^2\mt{I}\label{eq:covh1}.
\end{align}
In \eqref{eq:unknowns} and \eqref{eq:mean}, we introduce parameter $\gamma$ comprising the phase terms of the \gls{los} links, which constitutes the unknown parameter for the \gls{glrt}.

\subsection{Detection Metric}
\noindent In general, the \gls{glrt} of a binary hypothesis test problem is based on a detection metric $T(\vt{x})$ that is derived from the likelihood ratio of both hypotheses.
This metric is compared to a threshold $t$ and the detector assumes that $\mathcal{H}_1$ is the true hypothesis when $T(\vt{x}) > t$ holds for received signal $\vt{x}$ \cite{kay1998fundamentalsstatisticalsignal}.

For the problem at hand, the log-likelihood ratio for the hypotheses in \eqref{eq:hypotheses} is given by \cite{kay1998fundamentalsstatisticalsignal}
\begin{equation}\label{eq:log-likelihood-ratio}
  L_G(\vt{x}, \gamma) = \log\left(
      \frac{\det(\mt{C}_0)}{\det(\mt{C}_1)}
      \frac{\exp\left(
        -(\vt{x}-\vt{\mu}(\gamma))^H
        \mt{C}_1^{-1}
        (\vt{x}-\vt{\mu}(\gamma))
        \right)}
        {\exp\left(-\vt{x}^H\mt{C}_0^{-1}\vt{x}\right)}
      \right).
\end{equation}
The dependence on unknown parameter $\gamma$ is resolved by finding the best estimate $\hat{\gamma}$ for detection, i.e., finding $\hat{\gamma}$ that maximizes \eqref{eq:log-likelihood-ratio} \cite{kay1998fundamentalsstatisticalsignal}.
Using Lemma \ref{lm:mle}, we obtain
\begin{subequations}\label{eq:llr-simple}
  \begin{align}
    \max_\gamma L_G(\vt{x}, \gamma)&= \log\left(
      \frac{\det(\mt{C}_0)}{\det(\mt{C}_1)}
      \frac{\exp\left(
        -(\vt{x}-\bar{\vt{s}}\bar{\vt{s}}^H\vt{x})^H
        \mt{C}_1^{-1}
        (\vt{x}-\bar{\vt{s}}\bar{\vt{s}}^H\vt{x})
        \right)}
        {\exp\left(-\vt{x}^H\mt{C}_0^{-1}\vt{x}\right)}
      \right)\\
      &= \log\left(\frac{\det(\mt{C}_0)}{\det(\mt{C}_1)}\right)
        + \vt{x}^H \mt{C}_0^{-1} \vt{x} - (\vt{x}-\bar{\vt{s}}\bar{\vt{s}}^H\vt{x})^H \mt{C}_1^{-1}
        (\vt{x}-\bar{\vt{s}}\bar{\vt{s}}^H\vt{x}).\label{eq:llr-simpler}
    \end{align}
\end{subequations}
The first term in \eqref{eq:llr-simpler} is independent of $\vt{x}$ and can be neglected for detection. Thus, the detection metric of the \gls{glrt} is given by
\begin{subequations}\label{eq:metric-complex}
  \begin{align}
    T(\vt{x})&= \vt{x}^H \mt{C}_0^{-1} \vt{x} - (\vt{x}-\bar{\vt{s}}\bar{\vt{s}}^H\vt{x})^H
    \mt{C}_1^{-1}
    (\vt{x}-\bar{\vt{s}}\bar{\vt{s}}^H\vt{x})\\
    &= \vt{x}^H (\mt{C}_0^{-1} - \mt{C}_1^{-1}) \vt{x}
      - \vt{x}^H\bar{\vt{s}}\bar{\vt{s}}^H \mt{C}_1^{-1} \bar{\vt{s}}\bar{\vt{s}}^H\vt{x}
      + \vt{x}^H \mt{C}_1^{-1} \bar{\vt{s}}\bar{\vt{s}}^H\vt{x}
      + \vt{x}^H\bar{\vt{s}}\bar{\vt{s}}^H \mt{C}_1^{-1} \vt{x}.
  \end{align}
\end{subequations}
Since $\mt{C}_0^{-1} = \sigma^{-2}\mt{I}$ and $\mt{C}_1^{-1}$ is found with \eqref{eq:invcov} and \eqref{eq:covh1}, \eqref{eq:metric-complex} simplifies to
\begin{equation}\label{eq:metric}
  T(\vt{x}) = \frac{1}{\sigma^2} \abs*{\bar{\vt{s}}^H\vt{x}}^2,
\end{equation}
which is a correlation detector for the known preamble sequence $\bar{\vt{s}}$.

\subsection{Detection Performance}
\noindent In order to determine the performance of detection metric \eqref{eq:metric}, we examine the distribution of $T(\vt{x})$.
Based on \eqref{eq:hypotheses} and \eqref{eq:hypotheses-distribution}, \eqref{eq:metric} can be written as
\begin{subequations}\label{eq:metric-hypotheses}
  \begin{align}
    \mathcal{H}_0: T(\vt{x})&=
      \frac{1}{\sigma^2} \abs*{\bar{\vt{s}}^H\vt{n}}^2 =
      \frac{1}{2}z_0\\
    \mathcal{H}_1: T(\vt{x})&=
      \frac{1}{\sigma^2}
        \abs*{\bar{\vt{s}}^H\left(h\sqrt{P}\bar{\vt{s}} + \vt{n}\right)}^2 =
      \frac{\vt{w}^H\mt{C}\vt{w}P + \sigma^2}{2\sigma^2}z_1,
  \end{align}
\end{subequations}
where
\begin{subequations}\label{eq:metric-distributions}
  \begin{align}
    z_0 &= \frac{2}{\sigma^2} \abs*{\bar{\vt{s}}^H\vt{n}}^2 \sim \chi_2^2(0)\\
    z_1 &= \frac{2}{\vt{w}^H\mt{C}\vt{w}P + \sigma^2} \abs*{\bar{\vt{s}}^H\left(h\sqrt{P}\bar{\vt{s}} + \vt{n}\right)}^2 \sim \chi_2^2\left(2P\frac{\abs*{\vt{w}^H\vt{h}^{\text{LoS}}}^2}{\vt{w}^H\mt{C}\vt{w}P + \sigma^2}\right).
  \end{align}
\end{subequations}
Thus, in general, the detection metric follows a scaled noncentral chi-squared distribution, where both the scaling factor and the noncentrality parameter take different values under $\mathcal{H}_0$ and $\mathcal{H}_1$.
The noncentrality parameter under $\mathcal{H}_0$ equals zero, i.e., the probability of false alarm is given by
\begin{equation}\label{eq:pf}
    P_F= \text{Pr}\left\{\frac{1}{2} z_0 > t\right\} = 1 - \left(1 - e^{-t}\right).
\end{equation}
Thus, for a desired $P_F$, the detection threshold is obtained as $t = -\ln(P_F)$.
The probability of detection is then given by
\begin{equation}\label{eq:pd}
    P_D = \text{Pr}\left\{\frac{\vt{w}^H\mt{C}\vt{w}P + \sigma^2}{2\sigma^2} z_1 > t\right\}
    = \text{Pr}\left\{z_1 > \frac{- 2\sigma^2 \ln(P_F)}{\vt{w}^H\mt{C}\vt{w}P + \sigma^2}\right\}.
\end{equation}
Since $z_1$ is a noncentral chi-squared random variable, \eqref{eq:pd} can be expressed as follows
\begin{equation}\label{eq:probability-of-detection}
  P_D = Q_1\left(\sqrt{2P\frac{\abs*{\vt{w}^H\vt{h}^{\text{LoS}}}^2}{\vt{w}^H\mt{C}\vt{w}P + \sigma^2}}, \sqrt{\frac{-2\sigma^2 \ln(P_F)}{\vt{w}^H\mt{C}\vt{w}P + \sigma^2}}\right),
\end{equation}
where $Q_1(a,b)=\int_b^\infty t e^{-\frac{t^2 + a^2}{2}} I_0(at) dt$ denotes the first-order Marcum Q function for $b\geq 0$, $a>0$, and $I_0(\cdot)$ denotes the modified Bessel function of the first kind of order $0$ \cite{proakis2008digital}.
\begin{remark}\label{rm:special-case}
  The probability of detection in \eqref{eq:probability-of-detection} depends on both the mean value and the variance of the end-to-end channel.
  If the variance is zero, i.e., the channel comprises the \gls{los} path only, the second parameter of $Q_1$ in \eqref{eq:probability-of-detection} becomes independent of $\vt{w}$. Then, $P_D$ simplifies to the expression in \cite[Eq. (19)]{laue2021irsassistedactive}, where the scattered paths of the device-\gls{ris} channels were not taken into account.
\end{remark}

\section{Phase-Shift Design}\label{sec:phase-shift-design}
\noindent In this section, we first formulate a non-convex optimization problem to obtain the optimal phase-shift design for device activity detection.
In order to tackle the non-convexity, we reformulate the problem and propose two approximations of the objective function, each leading to a different suboptimal phase-shift design.
The reformulated problem is solved with an algorithm exploiting the \gls{mm} principle.
Moreover, we investigate the maximization of the average channel gain as an alternative phase-shift design criterion.

\subsection{Optimal Phase-Shift Design}
\noindent Recall that the \gls{ris} is deployed to assist the detection of devices that try to connect to the \gls{ap}, where the exact locations of the devices are not known.
Therefore, we aim for a phase-shift design that maximizes the guaranteed detection performance, i.e., the minimum probability of detection for any device location within the coverage area.
Thus, the optimal phase-shift design is obtained based on
\begin{subequations}\label{eq:optimization-problem}
  \begin{align}
    \max_{\vt{w}\in\mathbb{C}^U}\ \min_{q_t\in\mathcal{Q}}&\ Q_1\left(\sqrt{2P\frac{\abs*{\vt{w}^H\vt{h}^{\text{LoS}}}^2}{\vt{w}^H\mt{C}\vt{w}P + \sigma^2}}, \sqrt{\frac{-2\sigma^2 \ln(P_F)}{\vt{w}^H\mt{C}\vt{w}P + \sigma^2}}\right)\label{eq:objective}\\
    \text{s.t.}&\ \abs*{[\vt{w}]_u} = 1, \forall\ u\in\{0,\dots,U-1\}\label{eq:constraint-unit-modulus}.
  \end{align}
\end{subequations}

Optimization problem \eqref{eq:optimization-problem} cannot be solved directly because both constraint \eqref{eq:constraint-unit-modulus} and the Marcum Q function are non-convex.
Although the studies in \cite{kapinas2009monotonicitygeneralizedmarcum,sun2010monotonicitylogconcavity} showed that $Q_1(a, b)$ is monotonic and log-concave with respect to either $a$ or $b$, these results only hold when the other parameter is fixed.
Thus, they are not applicable to \eqref{eq:optimization-problem} because both parameters $a$ and $b$ depend on optimization variable $\vt{w}$.

Moreover, tight bounds and alternative representations of $Q_1(a, b)$ were proposed in \cite{baricz2009newboundsgeneralized} and \cite{annamalai2015newexponentialtype}, respectively, which provide simplified analytical expressions of the Marcum Q function.
However, these results do not lead to a tractable form for the optimization problem in \eqref{eq:optimization-problem}.

Therefore, we propose two approximations of the Marcum Q function that facilitate solving \eqref{eq:optimization-problem}.
We denote the resulting approximate objective functions, which will be derived in Sections \ref{sec:objective1} and \ref{sec:objective2}, respectively, as $J_k(\mt{W}, q_t),k\in\{1,2\}$, where $\mt{W}=\vt{w}\vt{w}^H\in\mathbb{C}^{U\times U}$ and $q_t\in\mathcal{Q}$ denote optimization variables.

\subsection{Reformulation to Min-Max SDP}
\noindent Let $J_k(\mt{W}, q_t)$ denote an approximation of the negative\footnote{Note the change of objective from max-min to min-max in \eqref{eq:prob-minmax-sdp}.} Marcum Q function in \eqref{eq:optimization-problem}, where $\mt{W}=\vt{w}\vt{w}^H$.
Then, \eqref{eq:optimization-problem} can be written as follows
\begin{subequations}\label{eq:prob-minmax-sdp}
  \begin{align}
    \min_{\mt{W}\in\mathbb{C}^{U\times U}}\ \max_{q_t\in\mathcal{Q}}&\ J_k(\mt{W}, q_t)\label{eq:minmax-objective}\\
    \text{s.t.}&\ \mt{W} \succeq 0\\
    &\ \left[\mt{W}\right]_{u\times u} = 1, \forall\ u\in\{0,\dots,U-1\}\label{eq:prob-minmax-sdp_diag}\\
    &\ \rank{\mt{W}} = 1,\label{eq:prob-minmax-sdp_rank}
  \end{align}
\end{subequations}
which is the well-known \gls{sdp} representation of a min-max optimization problem with a unit-modulus optimization variable. Constraint \eqref{eq:prob-minmax-sdp_rank} guarantees that the optimal vector $\vt{w}$ can be obtained from the decomposition of $\mt{W}$ and \eqref{eq:prob-minmax-sdp_diag} ensures that the elements of $\vt{w}$ have unit magnitude.
Note that problem \eqref{eq:prob-minmax-sdp} is still non-convex due to constraint \eqref{eq:prob-minmax-sdp_rank}, but we will show in Section \ref{sec:successive-convex-approximation} that this constraint can be replaced by adding a convex penalty term to the objective function.

\subsection{Lower Bound}\label{sec:objective1}
\noindent As a first approximation, we adopt a lower bound of $Q_1(a,b)$, which maintains the general goal of the desired phase-shift design, i.e., maximizing a guaranteed probability of detection.
It has been shown that \cite{annamalai2008simpleexponentialintegral,kapinas2009monotonicitygeneralizedmarcum}
\begin{equation}\label{eq:tailbound}
  Q(b+a) + Q(b-a) = Q_{0.5}(a,b) < Q_1(a,b)
\end{equation}
for $a \geq 0$ and $b>0$, where $Q(x)=\frac{1}{\sqrt{2\pi}} \int_x^\infty e^{-\frac{t^2}{2}} dt$ denotes the Gaussian Q-function.
Since $0 \leq Q(b+a)\leq Q(b-a)$ holds, we neglect $Q(b+a)$ in \eqref{eq:tailbound} and define the lower bound
\begin{equation}\label{eq:b-minus-a}
  Q(b-a) < Q_1(a,b),
\end{equation}
where
\begin{align}
  a&= \sqrt{2P\frac{\tr{\mt{W}\mt{M}}}{\tr{\mt{W}\mt{C}}P + \sigma^2}}\label{eq:a}\\
  b&= \sqrt{\frac{-2\sigma^2 \ln(P_F)}{\tr{\mt{W}\mt{C}}P + \sigma^2}}\label{eq:b}
\end{align}
are found from \eqref{eq:probability-of-detection} with $\mt{M}=\vt{h}^{\text{LoS}}\left(\vt{h}^{\text{LoS}}\right)^H$.
Note that \eqref{eq:b-minus-a} is tight\footnote{The accuracies of the bounds in \eqref{eq:tailbound} and \eqref{eq:b-minus-a} are discussed in Section \ref{sec:accuracies}.} if the \gls{los} component of the channel is dominant compared to the \gls{nlos} component, because $Q(b-a) \approx Q_1(a,b)$ for $b \gg 1$ and $b \gg b - a$ \cite{proakis2008digital}.
The probability of detection is maximized when the argument $b-a$ in \eqref{eq:b-minus-a} is minimized, leading to objective function
\begin{equation}
    \tilde{J}_1(\mt{W}, q_t) = \frac{\sqrt{-2\sigma^2 \ln(P_F)}
      - \sqrt{2P\tr{\mt{W}\mt{M}}}}
      {\sqrt{\tr{\mt{W}\mt{C}}P + \sigma^2}}.
\end{equation}
The fraction is avoided with the equivalent objective function
\begin{equation}\label{eq:objective1}
  J_1(\mt{W}, q_t) = \ln\left(\sqrt{\tr{\mt{W}\mt{C}}P + \sigma^2}\right) - \ln\left(\sqrt{2P\tr{\mt{W}\mt{M}}} - \sqrt{-2\sigma^2 \ln(P_F)}\right),
\end{equation}
which is used in \eqref{eq:prob-minmax-sdp} to obtain the optimal phase-shift vector for the lower-bound approximation of \eqref{eq:probability-of-detection}.

From \eqref{eq:objective1} we observe that minimizing $J_1(\mt{W}, q_t)$ implies minimizing the contribution of the \gls{nlos} component and maximizing the contribution of the \gls{los} component, which are reflected in the first and second term of \eqref{eq:objective1}, respectively.
In addition, we note that solving \eqref{eq:prob-minmax-sdp} with $J_1(\mt{W}, q_t)$ as objective function results in a phase-shift vector that depends on the noise power, the transmit power, the probability of false alarm, and the channel statistics.

\subsection{Objective Function for Strong LoS Path}\label{sec:objective2}
\noindent The advantage of objective function \eqref{eq:objective1} is that both the system parameters and the channel statistics are taken into account, which leads to phase-shift designs that accurately match a specific scenario.
However, in practice, phase-shift designs that are applicable for different parameter sets may be preferable.
For example, a design approach that is independent of the device's transmit power allows a more versatile deployment of the \gls{ris}.

This can be accomplished by neglecting the scattered paths of the channel, which is a reasonable approximation for channels with a strong \gls{los} component\footnote{The impact of neglecting the scattered paths is evaluated in Section \ref{sec:accuracies}.}.
Thus, by assuming $\tr{\mt{W}\mt{C}}\approx 0$, $J_1(\mt{W}, q_t)$ becomes a monotonic function in $\tr{\mt{W}\mt{M}}$, which allows us to define the equivalent objective function
\begin{equation}\label{eq:objective2}
  J_2(\mt{W}, q_t) = -\tr{\mt{W}\mt{M}}.
\end{equation}
Similarly, it is straightforward to verify that $P_D$ in \eqref{eq:probability-of-detection} is monotonic in $\abs*{\vt{w}^H\vt{h}^{\text{LoS}}}^2$ if $\vt{w}^H\mt{C}\vt{w}\approx 0$.
Objective function \eqref{eq:objective2} only depends on the \gls{los} component of the channel and is not affected by the transmit power, the noise power, and the desired probability of false alarm.
We note that $J_2(\mt{W}, q_t)$ has been used as objective function in the conference version of this paper \cite[Eq. (20)]{laue2021irsassistedactive}, cf. Remark~\ref{rm:special-case}.

\subsection{Average Channel Gain}\label{sec:channel-gain}
\noindent For \gls{ris} phase-shift design, besides the maximization of the probability of detection, the maximization of the average channel gain seems to be a desirable objective \cite{shen2021beamformingoptimizationirs,xing2021achievablerateanalysis,xing2021locationawarebeamforming}. The average channel gain is given by $\E{\abs*{h}^2} = \abs*{\vt{w}^H\vt{h}^{\text{LoS}}}^2 + \vt{w}^H\mt{C}\vt{w} = \tr{\mt{W}\left(\mt{M}+\mt{C}\right)}$, which can be maximized for all locations $q_t\in\mathcal{Q}$ using objective function\footnote{
One can find an upper bound of $P_D$ based on Markov's inequality, given by $P_D \leq (\tr{\mt{W}\left(\mt{M}+\mt{C}\right)}P + \sigma^2)/(-\sigma^2 \ln(P_F))$.
Interestingly, maximizing this bound is equivalent to minimizing $J_3(\mt{W}, q_t)$.
}
\begin{equation}\label{eq:objective3}
  J_3(\mt{W}, q_t) = -\tr{\mt{W}\left(\mt{M} + \mt{C}\right)}
\end{equation}
in \eqref{eq:prob-minmax-sdp}.
We consider $J_3(\mt{W}, q_t)$ for the performance evaluation in Section \ref{sec:performance-evaluation} and compare it to the results obtained with $J_1(\mt{W}, q_t)$ and $J_2(\mt{W}, q_t)$.
Similar to $J_2(\mt{W}, q_t)$, $J_3(\mt{W}, q_t)$ does not depend on the transmit power, noise power, and probability of false alarm.
In contrast, $J_3(\mt{W}, q_t)$ is only affected by the statistics of the channel.

\section{Optimization Algorithm}\label{sec:successive-convex-approximation}
\noindent In the previous section, three objective functions for the min-max \gls{sdp} in \eqref{eq:prob-minmax-sdp} were proposed. However, problem \eqref{eq:prob-minmax-sdp} cannot be solved directly because \eqref{eq:prob-minmax-sdp_rank} is a non-convex constraint.
Therefore, we reformulate \eqref{eq:prob-minmax-sdp} in this section and exploit the \gls{mm} principle to obtain a suboptimal phase-shift design for each objective function.

\subsection{Problem Reformulation}
\noindent Problem \eqref{eq:prob-minmax-sdp} can be equivalently rewritten as
\begin{subequations}\label{prob:min-sdp}
  \begin{align}
    \min_{\mt{W}\in\mathbb{C}^{U\times U},m\in\mathbb{R}}&\ m + \rho\left(\norm{\mt{W}}_* - \norm{\mt{W}}_2\right)\label{eq:penalty-objective}\\
    \text{s.t.}&\ \mt{W} \succeq 0\\
    &\ \left[\mt{W}\right]_{u\times u} = 1, \forall\ u\in\{0,\dots,U-1\}\\
    &\ J_k(\mt{W}, q_t) \leq m, \forall\ q_t\in\mathcal{Q}\label{eq:max-operation},
  \end{align}
\end{subequations}
where the $\max$ operation in \eqref{eq:minmax-objective} is replaced by constraint \eqref{eq:max-operation} and rank-one constraint \eqref{eq:prob-minmax-sdp_rank} is rewritten as a penalty term in \eqref{eq:penalty-objective}.
More specifically, exploiting $\norm{\mt{W}}_* - \norm{\mt{W}}_2=0 \Leftrightarrow \rank{\mt{W}} = 1$, it has been shown that the solution of \eqref{prob:min-sdp} is rank one if penalty factor $\rho>0$ is sufficiently large \cite{yu2020robustsecurewireless}.

In order to obtain optimized phase-shift designs based on \eqref{prob:min-sdp}, we examine the convexity of \eqref{prob:min-sdp} for $k\in\{1,2,3\}$.
We note that $J_2(\mt{W}, q_t)$ and $J_3(\mt{W}, q_t)$ are convex functions, and that $J_1(\mt{W}, q_t)$ and the objective function in \eqref{eq:penalty-objective} are in \gls{dc} form.
Thus, \eqref{prob:min-sdp} can be solved using the \gls{mm} principle, i.e., we apply convex upper bounds to the \gls{dc} terms and iteratively solve the resulting optimization problem until convergence \cite{sun2017majorizationminimizationalgorithms}.
The upper bounds are found as follows.

\begin{lemma}\label{lm:dc}
  Let $f(\mt{W}) = f_\text{cvx}(\mt{W}) + f_\text{ccv}(\mt{W})$ denote a function in \gls{dc} form, i.e., $f_\text{cvx}(\mt{W})$ and $f_\text{ccv}(\mt{W})$ denote a convex and a concave function in $\mt{W}\succeq 0$, respectively.
  Then, $f(\mt{W})$ is bounded as
  \begin{equation}\label{eq:dcbound}
    f(\mt{W}) \leq \bar{f}(\mt{W}) = f_\text{cvx}(\mt{W}) + \bar{f}_\text{ccv}(\mt{W}),
  \end{equation}
  where $\bar{f}(\mt{W})$ is a convex function in $\mt{W}$ and
  \begin{equation}\label{eq:dc-concave-bound}
    \bar{f}_\text{ccv}(\mt{W}) = f_\text{ccv}(\mt{W}^{\prime}) + \tr{\nabla_{\mt{W}}^H f_\text{ccv}(\mt{W}^{\prime})\left(\mt{W} - \mt{W}^{\prime}\right)}
  \end{equation}
  denotes the first-order approximation of $f_\text{ccv}(\mt{W})$ at $\mt{W}^{\prime}\succeq 0$.
\end{lemma}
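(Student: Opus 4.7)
The plan is to observe that Lemma \ref{lm:dc} is a direct instance of the standard majorization construction used in DC programming: replace the concave part by its tangent hyperplane at the current iterate and keep the convex part intact. So the proof reduces to two textbook facts about concave and convex functions, plus a sanity check that the stated affine expression is indeed the first-order Taylor expansion in the matrix variable.

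First I would invoke the first-order characterization of concavity for $f_\text{ccv}(\mt{W})$ on the convex cone $\mt{W}\succeq 0$: for any two points $\mt{W},\mt{W}^{\prime}\succeq 0$,
\begin{equation}
f_\text{ccv}(\mt{W}) \;\leq\; f_\text{ccv}(\mt{W}^{\prime}) + \tr{\nabla_{\mt{W}}^H f_\text{ccv}(\mt{W}^{\prime})\,(\mt{W}-\mt{W}^{\prime})},
\end{equation}
with equality when $\mt{W}=\mt{W}^{\prime}$. The right-hand side is exactly $\bar{f}_\text{ccv}(\mt{W})$ in \eqref{eq:dc-concave-bound}. Adding $f_\text{cvx}(\mt{W})$ to both sides and using $f(\mt{W})=f_\text{cvx}(\mt{W})+f_\text{ccv}(\mt{W})$ yields the inequality in \eqref{eq:dcbound}.

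Next I would verify that $\bar{f}(\mt{W})$ is convex in $\mt{W}$. Since $\bar{f}_\text{ccv}(\mt{W})$ is an affine function of $\mt{W}$ (its dependence on $\mt{W}$ is only through the inner product $\tr{\nabla_{\mt{W}}^H f_\text{ccv}(\mt{W}^{\prime})\,\mt{W}}$, all other quantities being evaluated at the fixed point $\mt{W}^{\prime}$), it is both convex and concave. Hence $\bar{f}(\mt{W}) = f_\text{cvx}(\mt{W}) + \bar{f}_\text{ccv}(\mt{W})$ is the sum of a convex function and an affine one, which is convex.

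The main obstacle — such as it is — is being careful with the matrix calculus: checking that the Hermitian trace expression $\tr{\nabla_{\mt{W}}^H f_\text{ccv}(\mt{W}^{\prime})(\mt{W}-\mt{W}^{\prime})}$ is indeed the correct Fréchet derivative of $f_\text{ccv}$ at $\mt{W}^{\prime}$ acting on the increment $\mt{W}-\mt{W}^{\prime}$, so that the scalar-variable first-order concavity inequality lifts to the matrix setting. Once this identification is made, the rest is immediate and the lemma follows.
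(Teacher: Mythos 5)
Your proposal is correct and follows essentially the same route as the paper: the first-order (tangent-plane) characterization of concavity gives $f_\text{ccv}(\mt{W}) \leq \bar{f}_\text{ccv}(\mt{W})$ as a global overestimator, and adding $f_\text{cvx}(\mt{W})$ yields \eqref{eq:dcbound}. Your additional explicit check that $\bar{f}(\mt{W})$ is convex as the sum of a convex and an affine function is a harmless elaboration of what the paper leaves implicit.
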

\begin{proof}
  Since $f_\text{ccv}(\mt{W})$ is a concave function and $\mt{W},\mt{W}^{\prime}$ are from a convex set, the first-order approximation of $f_\text{ccv}(\mt{W})$ at $\mt{W}^{\prime}$ is a global overestimator of $f_\text{ccv}(\mt{W})$, i.e., $f_\text{ccv}(\mt{W}) \leq \bar{f}_\text{ccv}(\mt{W})$ \cite{boyd2004convexoptimization}.
  Adding $f_\text{cvx}(\mt{W})$ to this inequality results in \eqref{eq:dcbound}.
\end{proof}

\begin{corollary}\label{cl:penalty}
  Given $\mt{W}^{\prime} \succeq 0$, a convex upper bound of the objective function in \eqref{eq:penalty-objective} is given by
  \begin{equation}\label{eq:dc-objective-bounded}
    m + \rho\tr*{\left(\mt{I} - \vt{v}^{\prime}{\vt{v}^{\prime}}^H\right)\mt{W}}
    - \rho\norm{\mt{W}^{\prime}}_2
    + \rho\tr*{\vt{v}^{\prime}{\vt{v}^{\prime}}^H \mt{W}^{\prime}},
  \end{equation}
  where $\vt{v}^{\prime}$ denotes the eigenvector that corresponds to the largest eigenvalue of $\mt{W}^{\prime}$.
\end{corollary}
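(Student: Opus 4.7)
The plan is to apply Lemma~\ref{lm:dc} to the penalty $\rho(\norm{\mt{W}}_* - \norm{\mt{W}}_2)$, which is the sole non-convex piece of the objective in \eqref{eq:penalty-objective}, since the linear term $m$ is already convex and can be carried through unchanged. The first step is to identify the convex and concave parts: the nuclear norm $\norm{\mt{W}}_*$ is convex, while $-\norm{\mt{W}}_2$ is concave (the spectral norm is convex, so its negative is concave). Thus the penalty fits the template $f_\text{cvx}(\mt{W}) + f_\text{ccv}(\mt{W})$ of Lemma~\ref{lm:dc} with $f_\text{cvx}(\mt{W}) = \rho\norm{\mt{W}}_*$ and $f_\text{ccv}(\mt{W}) = -\rho\norm{\mt{W}}_2$.

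Next I would simplify the convex part using the PSD constraint $\mt{W}\succeq 0$ that is enforced in \eqref{prob:min-sdp}. Under this constraint, all eigenvalues of $\mt{W}$ are non-negative, so the nuclear norm collapses to the trace, giving $\norm{\mt{W}}_* = \tr{\mt{W}} = \tr{\mt{I}\,\mt{W}}$. This rewriting will bring the convex part into a form linear in $\mt{W}$, which later combines cleanly with the linearization of the concave part.

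The third and main step is to construct the first-order overestimator of $-\rho\norm{\mt{W}}_2$ at $\mt{W}^{\prime}$ as prescribed by \eqref{eq:dc-concave-bound}. Here the technical heart of the argument lies: I need the (sub)gradient of the spectral norm at a PSD matrix. For $\mt{W}^{\prime}\succeq 0$, the spectral norm equals the largest eigenvalue $\lambda_{\max}(\mt{W}^{\prime})$, and a standard result in matrix analysis states that $\vt{v}^{\prime}{\vt{v}^{\prime}}^H$, with $\vt{v}^{\prime}$ the unit-norm eigenvector associated with $\lambda_{\max}(\mt{W}^{\prime})$, is a subgradient of $\norm{\cdot}_2$ at $\mt{W}^{\prime}$. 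Substituting $\nabla_{\mt{W}}\norm{\mt{W}}_2\big|_{\mt{W}^{\prime}} = \vt{v}^{\prime}{\vt{v}^{\prime}}^H$ into \eqref{eq:dc-concave-bound} gives
\begin{equation*}
  -\rho\norm{\mt{W}}_2 \,\leq\, -\rho\norm{\mt{W}^{\prime}}_2 - \rho\tr*{\vt{v}^{\prime}{\vt{v}^{\prime}}^H(\mt{W} - \mt{W}^{\prime})}.
\end{equation*}
I expect this is the only delicate point, since justifying the validity of the subgradient (rather than a true gradient, in case $\lambda_{\max}$ is degenerate) requires invoking the convexity-based overestimator argument used in Lemma~\ref{lm:dc} in its subgradient form, which is still valid for any element of the subdifferential.

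Finally, I would assemble the bound: adding the convex part $\rho\tr{\mt{I}\mt{W}}$ to the linearization and grouping the $\mt{W}$-dependent terms yields $\rho\tr{(\mt{I} - \vt{v}^{\prime}{\vt{v}^{\prime}}^H)\mt{W}} - \rho\norm{\mt{W}^{\prime}}_2 + \rho\tr{\vt{v}^{\prime}{\vt{v}^{\prime}}^H \mt{W}^{\prime}}$, and then adding the untouched linear term $m$ recovers exactly \eqref{eq:dc-objective-bounded}. Convexity of the bound is immediate: $m$ is linear in the optimization variable, $\tr{(\mt{I} - \vt{v}^{\prime}{\vt{v}^{\prime}}^H)\mt{W}}$ is linear in $\mt{W}$ (with $\mt{I} - \vt{v}^{\prime}{\vt{v}^{\prime}}^H \succeq 0$), and the remaining two terms are constants with respect to $\mt{W}$ since $\vt{v}^{\prime}$ and $\mt{W}^{\prime}$ are fixed parameters of the surrogate problem. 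This completes the corollary.
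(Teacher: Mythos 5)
Your proof is correct and follows essentially the same route as the paper: apply Lemma~\ref{lm:dc} with $\norm{\mt{W}}_*$ as the convex part and $-\norm{\mt{W}}_2$ as the concave part, linearize the latter at $\mt{W}^{\prime}$ using $\vt{v}^{\prime}{\vt{v}^{\prime}}^H$ as the (sub)gradient of the spectral norm, and use $\norm{\cdot}_*=\tr{\cdot}$ for positive semidefinite matrices to assemble \eqref{eq:dc-objective-bounded}. Your remark that $\vt{v}^{\prime}{\vt{v}^{\prime}}^H$ is only a subgradient when the largest eigenvalue is degenerate, and that the overestimator argument still goes through for any element of the subdifferential, is a small point of added rigor that the paper glosses over.
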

\begin{proof}
  Using Lemma \ref{lm:dc}, we find
  \begin{equation}\label{eq:dc-bound}
    \norm{\mt{W}}_* - \norm{\mt{W}}_2 \leq
    \norm{\mt{W}}_* - \norm{\mt{W}^{\prime}}_2 - \tr{\vt{v}^{\prime}{\vt{v}^{\prime}}^H\left(\mt{W} - \mt{W}^{\prime}\right)}
  \end{equation}
  because $\nabla_{\mt{W}} \norm{\mt{W}^{\prime}}_2 = \vt{v}^{\prime}{\vt{v}^{\prime}}^H$ \cite{yang2019federatedlearningbased,xu2020resourceallocationirs}.
  Using the identity $\norm{\cdot}_* = \tr{\cdot}$ for positive semidefinite matrices and substituting the bound in \eqref{eq:dc-bound} into \eqref{eq:penalty-objective} results in \eqref{eq:dc-objective-bounded}.
\end{proof}

\begin{corollary}\label{cl:objective}
  Given $\mt{W}^{\prime} \succeq 0$, objective function $J_1(\mt{W}, q_t)$ is bounded as
  \begin{multline}\label{eq:objective1-bound}
    J_1(\mt{W}, q_t) \leq \bar{J}_1(\mt{W}, \mt{W}^{\prime}, q_t) =
      -\ln\left(
        \sqrt{2P\tr{\mt{W}\mt{M}}}
        - \sqrt{-2\sigma^2 \ln(P_F)}
        \right)\\
      + \ln\left(\sqrt{\tr{\mt{W}^{\prime}\mt{C}}P + \sigma^2}\right)
      + \frac{1}{2}\frac{\tr{\mt{C}\left(\mt{W} - \mt{W}^{\prime}\right)}P}{\tr{\mt{W}^{\prime} \mt{C}}P + \sigma^2}
  \end{multline}
\end{corollary}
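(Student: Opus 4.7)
\emph{Proof plan for Corollary \ref{cl:objective}.} The strategy is to show that $J_1(\mt{W}, q_t)$ is in \gls{dc} form and then apply Lemma \ref{lm:dc} to linearize the concave part at $\mt{W}^\prime$. Concretely, I would split
\begin{equation*}
  J_1(\mt{W}, q_t) = \underbrace{-\ln\!\left(\sqrt{2P\tr{\mt{W}\mt{M}}} - \sqrt{-2\sigma^2\ln(P_F)}\right)}_{f_\text{cvx}(\mt{W})} + \underbrace{\ln\!\left(\sqrt{\tr{\mt{W}\mt{C}}P + \sigma^2}\right)}_{f_\text{ccv}(\mt{W})},
\end{equation*}
and verify that the first term is convex while the second is concave. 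Since $\mt{M}\succeq 0$, $\tr{\mt{W}\mt{M}}$ is linear and non-negative on $\mt{W}\succeq 0$, so $\sqrt{2P\tr{\mt{W}\mt{M}}}$ is a concave non-negative function; subtracting the constant $\sqrt{-2\sigma^2\ln(P_F)}$ preserves concavity. Because $\ln(\cdot)$ is concave and non-decreasing, the composition with this concave argument is concave, and negation makes $f_\text{cvx}$ convex. For $f_\text{ccv}$, note $\tfrac12\ln(\tr{\mt{W}\mt{C}}P+\sigma^2)$ is the $\log$ of an affine positive function of $\mt{W}$, hence concave.

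Once the \gls{dc} decomposition is in place, I would invoke Lemma \ref{lm:dc}: keep $f_\text{cvx}$ as it is and replace $f_\text{ccv}$ by its first-order Taylor expansion at $\mt{W}^\prime\succeq 0$,
\begin{equation*}
  \bar{f}_\text{ccv}(\mt{W}) = f_\text{ccv}(\mt{W}^\prime) + \tr{\nabla^H_{\mt{W}} f_\text{ccv}(\mt{W}^\prime)\,(\mt{W}-\mt{W}^\prime)}.
\end{equation*}
The remaining calculation is the gradient, which follows from the chain rule: writing $f_\text{ccv}(\mt{W}) = \tfrac12\ln(\tr{\mt{W}\mt{C}}P+\sigma^2)$ gives
\begin{equation*}
  \nabla_{\mt{W}} f_\text{ccv}(\mt{W}^\prime) = \frac{P\,\mt{C}}{2\bigl(\tr{\mt{W}^\prime\mt{C}}P+\sigma^2\bigr)}.
\end{equation*}
Substituting this into the first-order expansion and using the cyclic property of the trace yields precisely the last two terms of $\bar{J}_1(\mt{W},\mt{W}^\prime,q_t)$, while $f_\text{cvx}$ supplies the first term. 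Adding the inequality $f_\text{ccv}(\mt{W}) \leq \bar{f}_\text{ccv}(\mt{W})$ (the global-overestimator property of first-order approximations of concave functions, already used in the proof of Lemma \ref{lm:dc}) to $f_\text{cvx}(\mt{W})$ yields the stated bound.

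The only non-routine point is the convexity check of $f_\text{cvx}$, since it is a $\log$ of a difference rather than of a purely concave positive function; one has to note that the constant $\sqrt{-2\sigma^2\ln(P_F)}>0$ is simply subtracted from a concave quantity, so the argument of the $\ln$ remains concave (on the implicit domain where it is positive, which is exactly the regime where $J_1$ is meaningful, e.g., where the unapproximated $Q_1$ argument $b-a<0$). Everything else is a direct application of Lemma \ref{lm:dc} plus a short gradient computation, so I do not foresee further obstacles.
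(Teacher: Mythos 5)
Your proposal is correct and follows essentially the same route as the paper: the identical DC split of $J_1$ into the convex $-\ln(\sqrt{2P\tr{\mt{W}\mt{M}}}-\sqrt{-2\sigma^2\ln(P_F)})$ term and the concave $\ln(\sqrt{\tr{\mt{W}\mt{C}}P+\sigma^2})$ term, the same gradient computation, and the same invocation of Lemma~\ref{lm:dc}. Your explicit verification of the convexity/concavity of the two pieces (and the remark about the domain where the argument of the first logarithm is positive) is a useful elaboration that the paper's proof leaves implicit.
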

\begin{proof}
  Objective function $J_1(\mt{W}, q_t)$ can be written as
  \begin{equation}\label{eq:cost-split}
    J_1(\mt{W}, q_t) = J_{1,\text{cvx}}(\mt{W}, q_t) + J_{1,\text{ccv}}(\mt{W}, q_t),
  \end{equation}
  where
  \begin{align}
    J_{1,\text{cvx}}(\mt{W}, q_t) &= -\ln\left(\sqrt{2P\tr{\mt{W}\mt{M}}} - \sqrt{-2\sigma^2 \ln(P_F)}\right)\\
    J_{1,\text{ccv}}(\mt{W}, q_t) &= \ln\left(\sqrt{\tr{\mt{W}\mt{C}}P + \sigma^2}\right)
  \end{align}
  denote a convex and a concave function, respectively.
  The gradient of $J_{1,\text{ccv}}(\mt{W}, q_t)$ is given by
  \begin{equation}
    \nabla_{\mt{W}} J_{1,\text{ccv}}(\mt{W}, q) =
      \frac{1}{2}
      \frac{\mt{C}P}{\tr{\mt{W} \mt{C}}P + \sigma^2},
  \end{equation}
  which is found by applying the chain rule while taking the derivate of $J_{1,\text{ccv}}(\mt{W}, q)$ with respect to $\mt{W}$.
  Then, \eqref{eq:objective1-bound} follows from Lemma \ref{lm:dc}.
\end{proof}

Now, we use Corollary~\ref{cl:penalty} to replace the objective function in \eqref{eq:penalty-objective} with its upper bound.
In addition, we replace $J_k(\mt{W}, q_t)$ with $\bar{J}_k(\mt{W}, \mt{W}^{\prime}, q_t)$, where $\bar{J}_1(\mt{W}, \mt{W}^{\prime}, q_t)$ is given in Corollary \ref{cl:objective}, $\bar{J}_2(\mt{W}, \mt{W}^{\prime}, q_t) = J_2(\mt{W}, q_t)$, and $\bar{J}_3(\mt{W}, \mt{W}^{\prime}, q_t) = J_3(\mt{W}, q_t)$.
The resulting convex optimization problem for $k\in\{1,2,3\}$ is given as follows
\begin{subequations}\label{prob:convex}
  \begin{align}
    \min_{\mt{W}\in\mathbb{C}^{U\times U},m\in\mathbb{R}}&\ m + \rho\tr*{\left(\mt{I} - \vt{v}^{\prime}{\vt{v}^{\prime}}^H\right) \mt{W}}\label{eq:convex-objective}\\
    \text{s.t.}&\ \mt{W} \succeq 0\\
    &\ \left[\mt{W}\right]_{u\times u} = 1, \forall\ u\in\{0,\dots,U-1\}\\
    &\ \bar{J}_k(\mt{W}, \mt{W}^{\prime}, q_t)\leq m, \forall\ q_t\in\mathcal{Q},\label{eq:cost-constraint}
  \end{align}
\end{subequations}
where \eqref{eq:convex-objective} is obtained from \eqref{eq:dc-objective-bounded} neglecting the terms not depending on $\mt{W}$ or $m$.

Then, according to the \gls{mm} principle, problem \eqref{prob:convex} is iteratively solved as follows.
Given a feasible point $\mt{W}_i=\vt{w}_i\vt{w}_i^H$ for the $i$th iteration of the algorithm, we set $\vt{v}^{\prime}=\vt{w}_i$ and $\mt{W}^{\prime}=\mt{W}_i$, and denote the solution of \eqref{prob:convex} as $\mt{W}_{i+1}$. As summarized in Algorithm~\ref{alg:sca}, these steps are repeated until the objective function converges, i.e., the relative difference between the optimal values $\Omega_i$ and $\Omega_{i-1}$ is smaller than threshold $\nu$.
\begin{algorithm}[t]
  \caption{Phase-Shift Optimization}\label{alg:sca}
  \SetKwFunction{define}{define}
  \SetKwFunction{solve}{solve}
  \SetKwFunction{decomp}{decompose}
  Define $\mathcal{Q}$, $k$, $\rho$, $\nu$, and $\vt{w}_0$\;
  $i \gets 0$\;
  $\mt{W}_i, \vt{v}_i, \Omega_i \gets \vt{w}_i \vt{w}_i^H, \vt{w}_i, \max_{q_t \in \mathcal{Q}} J_k(\mt{W}_i, q_t)$\;
  \Repeat{$\abs*{\frac{\Omega_i - \Omega_{i-1}}{\Omega_{i-1}}} \leq \nu$}{
    $\mt{W}^{\prime}, \vt{v}^{\prime} \gets \mt{W}_i, \vt{v}_i$\;
    $\mt{W}_{i+1}, m_{i+1} \gets$ \solve{problem \eqref{prob:convex}}\;
    $\vt{v}_{i+1} \gets$ \decomp{$\mt{W}_{i+1}$}\;
    $\Omega_{i+1} \gets \max_{q_t \in \mathcal{Q}} J_k(\mt{W}_{i+1}, q_t)$\;
    $i \gets i + 1$\;
  }
  $\vt{w}_\text{opt} \gets \vt{v}_i$\;
\end{algorithm}

\subsection{Convergence and Computational Complexity}
\noindent Algorithm \ref{alg:sca} follows the \gls{mm} principle and iteratively solves \eqref{prob:convex}, which tightens the upper bound of \eqref{prob:min-sdp} in each iteration.
Thus, the sequence $\{\mt{W}_i, m_i\}_{i\in\mathbb{N}}$ provides non-increasing sequences of objective values for \eqref{prob:min-sdp} and for its equivalent formulation \eqref{eq:prob-minmax-sdp}.
Furthermore, these objective values converge to a stationary value because the objective functions of \eqref{prob:min-sdp} and \eqref{eq:prob-minmax-sdp} are bounded below.
As a result, the limit point of the sequence $\{\mt{W}_i, m_i\}_{i\in\mathbb{N}}$ obtained with Algorithm \ref{alg:sca} converges to a stationary point of problem \eqref{eq:prob-minmax-sdp} \cite{sun2017majorizationminimizationalgorithms}.

The computational complexity of Algorithm \ref{alg:sca} mainly depends on the complexity of convex optimization problem \eqref{prob:convex}.
For a given solution accuracy of $\epsilon > 0$, problem \eqref{prob:convex} can be numerically solved with a worst-case complexity of $\mathcal{O}\left(\left(Q+U\right)^4\sqrt{U}\log\left(\frac{1}{\epsilon}\right)\right)$ \cite{luo2010semidefiniterelaxationquadratic}.
As one can see, the complexity polynomially depends on the number of unit cells and the number of considered device locations.
However, since \eqref{prob:convex} does not depend on instantaneous \gls{csi}, the problem does not have to be solved in an online manner.

\section{Performance Evaluation}\label{sec:performance-evaluation}
\noindent In this section, we evaluate the detection performance for different system parameters and compare the results obtained with objective functions $J_1(\mt{W} , q_t)$, $J_2(\mt{W} , q_t)$, and $J_3(\mt{W} , q_t)$.
More specifically, we apply Algorithm \ref{alg:sca} for $k\in\{1,2,3\}$, which results in three optimized phase-shift vectors $\vt{w}_{\text{opt},1}$, $\vt{w}_{\text{opt},2}$, and $\vt{w}_{\text{opt},3}$. Then, for each phase-shift vector, we use \eqref{eq:probability-of-detection} to evaluate the minimum probability of detection across all locations within the coverage area, i.e., we plot $\min_{q_t\in\mathcal{Q}} P_D(q_t)$.
For convenience, we omit the arguments $(\mt{W}, q_t)$ for the following discussion.

\subsection{Baseline Phase-Shift Design}
\noindent We also compare the performance of our proposed phase-shift designs with the quadratic phase-shift design in \cite{jamali2021powerefficiencyoverhead,laue2021irsassistedactive}, which is an analytical phase-shift design for large coverage areas.
In addition, this design is used for initialization of $\vt{w}_0$ in Algorithm~\ref{alg:sca}.
Therefore, our evaluation will reveal the performance gain achieved by the additional effort of iteratively solving optimization problem \eqref{prob:convex}.

\subsection{System and Channel Parameters}\label{sec:parameters}
\noindent The system parameters for the performance evaluation are summarized in Table~\ref{tab:parameters}.
Furthermore, since a suitable value of penalty factor $\rho$ depends on $k$, $P_\text{tx}$, $K$, etc., we select $\rho$ based on the set $\{1, 10, 100, 1000\}$ for each considered case.
\begin{table}
  \scriptsize
  \centering
  \renewcommand{\arraystretch}{1.0}
  \begin{threeparttable}
    \caption{System parameters.}
    \label{tab:parameters}
    \begin{tabular}{l l|l l|l l|l l}
      \hline
      Parameter & Value & Parameter & Value & Parameter & Value & Parameter & Value\\
      \hline
      $(c_x, c_y, c_z)$ & $(\SI{-10}{\metre}, \SI{-30}{\metre}, \SI{30}{\metre})$ &
      $(U_x, U_y)$ & $(4, 8)$ &
      $M$ & 4 &
      $\varphi_{q_t}$ & \SI{0}{\degree}\\
      $d_{q_r}, \Psi(q_r)$ & $\SI{25}{\metre}, (\SI{90}{\degree}, \SI{37}{\degree})$ &
      $(d_x, d_y)$ & $(\lambda /2, \lambda /2)$ &
      $S$ & 64 &
      $P_F$ & $0.1$\\
      $D_z$ & \SI{20}{\metre} &
      $\lambda$ & \SI{0.1}{\metre} &
      $\sigma^2$ & \SI{-100}{\dBm} & $\nu$ & $10^{-7}$\\
      \hline
    \end{tabular}
  \end{threeparttable}
\end{table}
Moreover, we study different channel conditions of the device-\gls{ris} link by varying the power ratio of the \gls{los} and \gls{nlos} components, given by $K = \abs{\bar{h}(q_t)}^2 / \sum_{l=1}^L \sigma_l^2(q_t)$.
For simplicity, we assume equal variances $\bar{\sigma}^2(q_t)=\sigma_l^2(q_t), \forall\ l\in\{1, \dots, L\}$, for all scattered paths, which results in $\bar{\sigma}^2(q_t)=\abs{\bar{h}(q_t)}^2/(KL)$.
Furthermore, we assume $L=2$ scattered paths and characterize their incident directions with angle $\alpha$ as shown in Fig.~\ref{fig:schematic}.

\begin{figure}
  \begin{minipage}[t]{0.48\columnwidth}
    \centerline{
      \tikzsetnextfilename{schematic}
      \begin{tikzpicture}
        \begin{axis}[
        	xmin=-60,
          xmax=20,
        	ymin=0,
          ymax=60,
          enlargelimits=0.01,
          font=\footnotesize,
          width=8cm,
          height=5.3cm,
          line width=1pt,
          axis line style={line width=0.5pt},
          anchor=origin,
          disabledatascaling,
          xlabel=$y$ / \si{\metre},
          ylabel=$z$ / \si{\metre},
        ]
          \pgfplotsextra{
            \newcommand{\spreading}{27}
          }

          \draw [black] (-40,20) rectangle (-20, 40);
          \draw [black] (-2, 0) -- (2, 0);

          \draw [-stealth] (135-\spreading:20) -- (135-\spreading:10);
          \draw [-stealth] (135+\spreading:20) -- (135+\spreading:10);

          \draw [dashed,line width=0.7pt] (135:70) -- (0,0);
          \draw [dashed,line width=0.7pt] (135-\spreading:70) -- (0,0);
          \draw [dashed,line width=0.7pt] (135+\spreading:70) -- (0,0);

          \draw [|-stealth,line width=0.7pt] (-18,18) arc[start angle=135,delta angle=-\spreading,x radius=25.46,y radius=25.46];
          \draw [|-stealth,line width=0.7pt] (-18,18) arc[start angle=135,delta angle=\spreading,x radius=25.46,y radius=25.46];

          \node at (-43,30) {$D_z$};
          \node at (-30,17) {$D_y$};

          \node at (135-13.5:22) {$\alpha$};
          \node at (135+13.5:22) {$-\alpha$};

          \node [pin={[pin edge={line width=0.7pt}]90:coverage area}] at (-30,35) {};
          \node [pin={[pin edge={line width=0.7pt}]181.5:scatterer directions}] at (135+\spreading:18) {};
          \node [pin={[pin edge={line width=0.7pt}]56:RIS}] at (0,0) {};

        \end{axis}
      \end{tikzpicture}
    }
    \caption{Schematic top view of the scenario.}
    \label{fig:schematic}
  \end{minipage}
  \hfill
  \begin{minipage}[t]{0.48\columnwidth}
    \centerline{
      \tikzsetnextfilename{accuracy}
      \begin{tikzpicture}
        \begin{axis}[
          line plot style,
          xlabel=$K$ / dB,
          ylabel=Relative error,
          legend entries={
            {$\epsilon(J_1, q_{t,1}^*)$},
            {$\epsilon(J_2, q_{t,2}^*)$},
            },
          legend pos=south east,
          ymin=-0.09,
          ymax=0.09,
          scaled ticks=false,
          y tick label style={/pgf/number format/fixed},
          ytick={-0.06,-0.02,0,0.02,0.06},
          extra x ticks={3},
          extra x tick style={grid=major},
          ]
        \addplot[color1,solid] table [col sep=comma] {accuracy-01_00.csv};
        \addplot[color3,dashed] table [col sep=comma] {accuracy-01_01.csv};
        \end{axis}
      \end{tikzpicture}
    }
  \caption{Relative error of approximations for $J_1$ and $J_2$ as a function of $K$.}
  \label{fig:accuracy}
  \end{minipage}
\end{figure}

\subsection{Accuracy of Approximation}\label{sec:accuracies}
\noindent First, we examine the accuracy of lower bound \eqref{eq:b-minus-a} that results in objective function $J_1$.
In addition, we quantify the effect of neglecting the scattered paths, which is used to arrive at objective function $J_2$.
To this end, let $a_k$ and $b_k$ denote parameters $a$ in \eqref{eq:a} and $b$ in \eqref{eq:b}, respectively, evaluated for $\mt{W}_k=\vt{w}_{\text{opt},k}\vt{w}_{\text{opt},k}^H$, $k\in\{1,2\}$.
Furthermore, we define $\tilde{a}_k = \left. a_k\right|_{\tr{\mt{W}_k\mt{C}}=0}$ and $\tilde{b}_k = \left. b_k\right|_{\tr{\mt{W}_k\mt{C}}=0}$, respectively, which represent $a_k$ and $b_k$ when the scattered paths are neglected.
Then, $Q(b_1 - a_1)$ is the lower bound for $Q_1(a_1, b_1)$ that leads to $J_1$ and $Q_1(\tilde{a}_2, \tilde{b}_2)$ is the approximation of $Q_1(a_2, b_2)$ that results in $J_2$.
Thus, the relative errors for $J_1$ and $J_2$ at location $q_t$ are given by $\epsilon(J_1, q_t) = Q(b_1 - a_1)/Q_1(a_1, b_1) - 1$ and $\epsilon(J_2, q_t) = Q_1(\tilde{a}_2, \tilde{b}_2)/Q_1(a_2, b_2) - 1$, respectively.
The largest relative errors across the coverage area for $J_1$ and $J_2$ are observed at $q_{t,1}^* = \argmax_{q_t\in\mathcal{Q}} \abs*{Q(b_1 - a_1) - Q_1(a_1, b_1)}/Q_1(a_1, b_1)$ and $q_{t,2}^* = \argmax_{q_t\in\mathcal{Q}} \abs*{ Q_1(\tilde{a}_2, \tilde{b}_2) - Q_1(a_2, b_2)}/Q_1(a_2, b_2)$, respectively.

For the system parameters described in Section~\ref{sec:parameters} and Table~\ref{tab:parameters}, Fig.~\ref{fig:accuracy} shows $\epsilon(J_1, q_{t,1}^*)$ and $\epsilon(J_2, q_{t,2}^*)$ for different values of $K$, where we set $D_y=\SI{20}{\metre}$, $P_\text{tx}=\SI{-1}{\dBm}$, and $\alpha=\SI{30}{\degree}$.
One can see that the relative error for $J_1$ remains below $0.02$ for all $K$.
Furthermore, we observe that the relative error for $J_2$ approaches zero for large $K$, but increases for low $K$.
However, for the performance evaluation, we consider RIS deployments that result in channels with a dominant \gls{los} component \cite{liu2020matrixcalibrationbased,mursia2021rismareconfigurableintelligent}.
Thus, assuming $K\geq\SI{3}{\decibel}$, Fig.~\ref{fig:accuracy} confirms that $J_1$ and are $J_2$ are accurate approximations with relative errors smaller than $0.06$.

\subsection{Impact of Transmit Power}
\noindent The dependence of the detection performance on the transmit power is shown in Fig. \ref{fig:var-P}.
As expected, we observe an improvement of the detection performance for all phase-shift designs as the transmit power increases. Furthermore, for high transmit powers, the minimum probability of detection across the coverage area approaches 1 in all cases.
In addition, Fig.~\ref{fig:var-P} indicates that phase shifts designed based on $J_1$ and $J_2$ outperform those designed based on $J_3$ and the baseline quadratic design.
The best performance is achieved with $J_1$, which improves the detection performance of the baseline quadratic design from 0.91 to 0.99 for $P_\text{tx}=\SI{0}{\dBm}$.
\begin{figure}
  \begin{minipage}[t]{0.48\columnwidth}
    \centerline{
      \tikzsetnextfilename{var_P}
      \begin{tikzpicture}
        \begin{axis}[
          line plot style,
          xlabel=$P_\text{tx}$ / dBm,
          ylabel=$\min_{q_t\in\mathcal{Q}} P_D(q_t)$,
          legend entries={
            {$J_1(\mt{W},q_t)$},
            {$J_2(\mt{W},q_t)$},
            {$J_3(\mt{W},q_t)$},
            {quadratic}},
          legend pos=south east,
          xmin=-4,
          xmax=4,
          ]
        \addplot table [col sep=comma] {var_P-00_00.csv};
        \addplot table [col sep=comma] {var_P-00_01.csv};
        \addplot table [col sep=comma] {var_P-00_02.csv};
        \addplot table [col sep=comma] {var_P-00_03.csv};
        \end{axis}
      \end{tikzpicture}
    }
    \caption{Minimum probability of detection as a function of the transmit power for $K=\SI{3}{\dB}$, $\alpha=\SI{30}{\degree}$, and $D_y=\SI{20}{\metre}$.}
    \label{fig:var-P}
  \end{minipage}
  \hfill
  \begin{minipage}[t]{0.48\columnwidth}
    \centerline{
      \tikzsetnextfilename{var_K}
      \begin{tikzpicture}
        \begin{axis}[
          line plot style,
          xlabel=$K$ / dB,
          ylabel=$\min_{q_t\in\mathcal{Q}} P_D(q_t)$,
          legend entries={
            {$J_1(\mt{W},q_t)$},
            {$J_2(\mt{W},q_t)$},
            {$J_3(\mt{W},q_t)$},
            {quadratic}},
          legend pos=south east,
          ymin=0.83,
          ymax=1.03
          ]
        \addplot table [col sep=comma] {var_K-00_00.csv};
        \addplot table [col sep=comma] {var_K-00_01.csv};
        \addplot table [col sep=comma] {var_K-00_02.csv};
        \addplot table [col sep=comma] {var_K-00_03.csv};

        \addplot [loosely dashdotdotted,black] coordinates {(3, 0.984) (20, 0.984)};
        \node [pin={[pin edge={line width=0.7pt}]30:$K\rightarrow\infty$}] at (6,0.984) {};
        \end{axis}
      \end{tikzpicture}
    }
    \caption{Minimum probability of detection as a function of $K$ for $D_y=\SI{20}{\metre}$, $P_\text{tx}=\SI{-1}{\dBm}$, and $\alpha=\SI{30}{\degree}$.}
    \label{fig:var-K}
  \end{minipage}
\end{figure}

The differences in performance among the optimized phase-shift designs can be explained considering the different objective functions.
More specifically, the detection performance in Fig.~\ref{fig:var-P} depends on both the transmit power and the reflection gain of the \gls{ris}.
For $J_2$ and $J_3$, the reflection gains are constant for all values of $P_\text{tx}$ because both objective functions are independent of the transmit power.
In contrast, the phase-shift design obtained for $J_1$ adapts to $P_\text{tx}$, i.e., the reflection beam can be made narrower as the transmit power increases, which yields a better performance.
Moreover, $J_2$ yields better results than $J_3$. Both objective functions maximize the power of the \gls{los} paths but $J_3$ also enhances the power of the scattered paths, which leads to more variations in the effective end-to-end channel and worse detection performance.

\subsection{Impact of Scattering Strength}\label{sec:impact-k}
\noindent The relation between the \gls{los} and the \gls{nlos} components in the channel also plays an important role for phase-shift optimization, which is illustrated in Fig.~\ref{fig:var-K}.
One can see that the detection performance increases with the value of $K$ and that all optimized phase-shift designs converge for $K\rightarrow \infty$ to the asymptotic performance of $0.984$.
This behavior can be explained with the reduced variations in the channel as $K$ increases. For $K\rightarrow \infty$, $\vt{w}^H\mt{C}\vt{w} \rightarrow 0$ and the channel becomes fully deterministic.
In this case, $J_1$, $J_2$, and $J_3$ are equivalent.

For small values of $K$, similar to Fig.~\ref{fig:var-P}, $J_1$ achieves the highest probability of detection, followed by $J_2$, $J_3$, and the baseline quadratic design.

\subsection{Impact of Scattering Directions}
\noindent In addition to the scattering strength, the incident directions of the scattered paths have an impact on the detection performance, too.
In order to evaluate this dependence in more detail, Fig. \ref{fig:var-alpha} depicts the probability of detection for $\alpha\in\{\SI{10}{\degree},\SI{20}{\degree},\SI{30}{\degree}\}$, which corresponds to scattering within the coverage area, near the edge of the coverage area, and with large distance to the coverage area, respectively, see Fig.~\ref{fig:schematic}.
Interestingly, $J_2$ provides the best performance for $\alpha = \SI{10}{\degree}$, but $J_1$ is superior for $\alpha\in\{\SI{20}{\degree},\SI{30}{\degree}\}$.
This observation can be explained as follows.

Recall that $J_1$ is minimized by maximizing the power of the \gls{los} paths and minimizing the power of the \gls{nlos} paths, which can be done concurrently and independently if both channel components are separable in the angular domain. For $\alpha\in\{\SI{20}{\degree},\SI{30}{\degree}\}$, the separation is possible because the scattering is outside the coverage area.
In this case, objective function $J_1$ results in a phase-shift design that suppresses the scattered paths, and thus achieves the best detection performance.

\begin{figure}
    \centerline{
      \tikzsetnextfilename{var_alpha}
      \begin{tikzpicture}
        \begin{axis}[
          bar plot style,
          xlabel=$\alpha$ / deg,
          ylabel=$\min_{q_t\in\mathcal{Q}} P_D(q_t)$,
          legend pos=north west,
          xmin=5,
          xmax=35,
          ymin=0.85,
          ymax=1.0,
          height=5.8cm,
          ]
        \addplot[fill,color1] table [col sep=comma] {var_A-00_00.csv};
        \addplot[fill,color2] table [col sep=comma] {var_A-00_01.csv};
        \addplot[fill,color3] table [col sep=comma] {var_A-00_02.csv};
        \addplot[fill,color4] table [col sep=comma] {var_A-00_03.csv};
        \legend{
          {$J_1(\mt{W},q_t)$},
          {$J_2(\mt{W},q_t)$},
          {$J_3(\mt{W},q_t)$},
          {quadratic}}
        \end{axis}
      \end{tikzpicture}
    }
    \caption{Minimum probability of detection for $\alpha\in\{\SI{10}{\degree},\SI{20}{\degree},\SI{30}{\degree}\}$, $D_y=\SI{20}{\metre}$, $P_\text{tx}=\SI{0}{\dBm}$, and $K=\SI{3}{\decibel}$.}
    \label{fig:var-alpha}
\end{figure}

However, separation is not possible for $\alpha = \SI{10}{\degree}$.
Consequently, there is a design conflict for $J_1$ because minimizing the power of the scattered paths also affects the reflection gain for some \gls{los} paths.
Therefore, the reflection gain for some device locations is reduced, which limits the minimum probability of detection across the coverage area.

In contrast, $J_2$ is independent of $\alpha$, i.e., the phase-shift design considers the \gls{los} paths to the coverage area only. As a result, the \gls{los} and \gls{nlos} paths are equally reflected to the \gls{ap} when the scattering is within the coverage area.
Although this approach leads to more variations in the device-\gls{ris} channel, we observe in Fig.~\ref{fig:var-alpha} that it provides the highest minimum probability of detection for $\alpha=\SI{10}{\degree}$. However, as $\alpha$ increases, the best performance is achieved by $J_1$, which explicitly reduces the impact of the scattered paths.

\subsection{Impact of Side Lobes}
\noindent In order to highlight the difference between objective functions $J_1$ and $J_2$, Fig.~\ref{fig:reflection-patterns} illustrates the reflection pattern of the \gls{ris} for the phase shifts designed with both $J_1$ and $J_2$, respectively, i.e., it displays the reflection gain $\abs*{g\left(\Psi\left(q_r\right), \Psi\left(q_t\right)\right)}^2$ for all locations $\{q_t \mid x_{q_t} = \SI{-10}{\metre}, \SI{-60}{\metre} < y_{q_t} < \SI{20}{\metre}, \SI{0}{\metre} < z_{q_t} < \SI{60}{\metre}\}$.
As in the schematic view in Fig. \ref{fig:schematic}, the arrows in Fig.~\ref{fig:reflection-patterns} indicate the incident directions of the two scattered paths and the rectangle marks the coverage area.
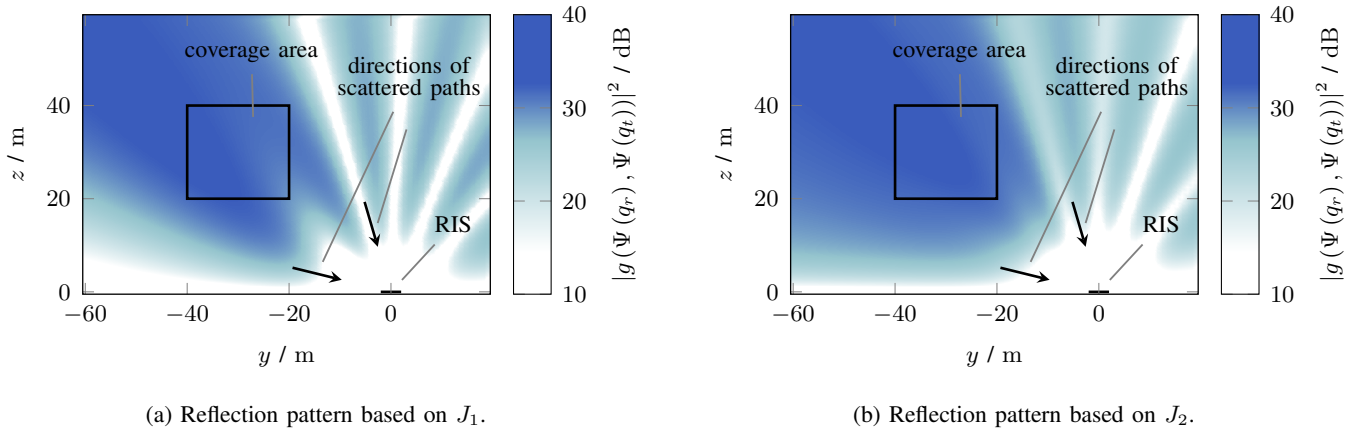
\begin{figure}
  \begin{subfigure}{0.43\columnwidth}
    \centerline{
      \tikzsetnextfilename{reflection_pattern_j1}
      \begin{tikzpicture}
        \pgfmathparse{-20*sin(45-30.0)}
        \let\OneAxpos=\pgfmathresult
        \pgfmathparse{20*cos(45-30.0)}
        \let\OneAypos=\pgfmathresult

        \pgfmathparse{-10*sin(45-30.0)}
        \let\OneBxpos=\pgfmathresult
        \pgfmathparse{10*cos(45-30.0)}
        \let\OneBypos=\pgfmathresult

        \pgfmathparse{-20*sin(45+30.0)}
        \let\TwoAxpos=\pgfmathresult
        \pgfmathparse{20*cos(45+30.0)}
        \let\TwoAypos=\pgfmathresult

        \pgfmathparse{-10*sin(45+30.0)}
        \let\TwoBxpos=\pgfmathresult
        \pgfmathparse{10*cos(45+30.0)}
        \let\TwoBypos=\pgfmathresult

        \pgfmathparse{400*pi}
        \let\gscale=\pgfmathresult

        \begin{axis}[heat map style,colorbar style={ylabel=$\abs*{g\left(\Psi\left(q_r\right), \Psi\left(q_t\right)\right)}^2$ / \si{\decibel},font=\footnotesize,}]
        \addplot[matrix plot*,shader=interp,mesh/cols=80,point meta=explicit] table [col sep=comma,meta=z,point meta min=10,point meta max=40] {pattern-00.csv};
        \draw [black] (-40,20) rectangle (-20, 40);
        \draw [black,-stealth] (\OneAxpos, \OneAypos) -- (\OneBxpos, \OneBypos);
        \draw [black,-stealth] (\TwoAxpos, \TwoAypos) -- (\TwoBxpos, \TwoBypos);
        \draw [black] (-2, 0) -- (2, 0);

        \pgfplotsextra{
          \newcommand{\spreading}{30}
        }
        \node [pin={[pin edge={line width=0.7pt}]92:coverage area}] at (-27,35) {};
        \node [pin={[pin edge={line width=0.7pt},align=center,pin distance=2cm,font=\footnotesize\linespread{0.8}\selectfont]87:{directions of\\scattered paths}}] at (135+\spreading:15) {};
        \draw[line width=0.7pt,color=gray] (135-5-\spreading:15) -- (85:35);
        \node [pin={[pin edge={line width=0.7pt}]56:RIS}] at (0,0) {};

        \end{axis}
      \end{tikzpicture}
    }
    \caption{Reflection pattern based on $J_1$.}
    \label{fig:reflection-pattern-j1}
  \end{subfigure}
  \hfill
  \begin{subfigure}{0.43\columnwidth}
    \centerline{
      \tikzsetnextfilename{reflection_pattern_j2}
      \begin{tikzpicture}
        \pgfmathparse{-20*sin(45-30.0)}
        \let\OneAxpos=\pgfmathresult
        \pgfmathparse{20*cos(45-30.0)}
        \let\OneAypos=\pgfmathresult

        \pgfmathparse{-10*sin(45-30.0)}
        \let\OneBxpos=\pgfmathresult
        \pgfmathparse{10*cos(45-30.0)}
        \let\OneBypos=\pgfmathresult

        \pgfmathparse{-20*sin(45+30.0)}
        \let\TwoAxpos=\pgfmathresult
        \pgfmathparse{20*cos(45+30.0)}
        \let\TwoAypos=\pgfmathresult

        \pgfmathparse{-10*sin(45+30.0)}
        \let\TwoBxpos=\pgfmathresult
        \pgfmathparse{10*cos(45+30.0)}
        \let\TwoBypos=\pgfmathresult

        \pgfmathparse{400*pi}
        \let\gscale=\pgfmathresult

        \begin{axis}[heat map style,colorbar style={ylabel=$\abs*{g\left(\Psi\left(q_r\right), \Psi\left(q_t\right)\right)}^2$ / \si{\decibel},font=\footnotesize,}]
        \addplot[matrix plot*,shader=interp,mesh/cols=80,point meta=explicit] table [col sep=comma,meta=z,point meta min=10,point meta max=40] {pattern-01.csv};

        \draw [black] (-40,20) rectangle (-20, 40);
        \draw [black,-stealth] (\OneAxpos, \OneAypos) -- (\OneBxpos, \OneBypos);
        \draw [black,-stealth] (\TwoAxpos, \TwoAypos) -- (\TwoBxpos, \TwoBypos);
        \draw [black] (-2, 0) -- (2, 0);

        \pgfplotsextra{
          \newcommand{\spreading}{30}
        }
        \node [pin={[pin edge={line width=0.7pt}]92:coverage area}] at (-27,35) {};
        \node [pin={[pin edge={line width=0.7pt},align=center,pin distance=2cm,font=\footnotesize\linespread{0.8}\selectfont]87:{directions of\\scattered paths}}] at (135+\spreading:15) {};
        \draw[line width=0.7pt,color=gray] (135-5-\spreading:15) -- (85:35);
        \node [pin={[pin edge={line width=0.7pt}]56:RIS}] at (0,0) {};

        \end{axis}
      \end{tikzpicture}
    }
    \caption{Reflection pattern based on $J_2$.}
    \label{fig:reflection-pattern-j2}
  \end{subfigure}
  \caption{Reflection patterns for $D_y=\SI{20}{\metre}$, $P_\text{tx}=\SI{0}{\dBm}$, $\alpha=\SI{30}{\degree}$, and $K=\SI{3}{\dB}$.}
  \label{fig:reflection-patterns}
\end{figure}

One can see in Fig.~\ref{fig:reflection-pattern-j2} that the reflection pattern for $J_2$ has a wide main lobe that is centered at the coverage area and slowly decays at both sides.
In addition, both scattered paths are covered by the side lobes of the reflection pattern, i.e., the channel variations of the device-\gls{ris} channel are reflected to the \gls{ap}. The phase-shift design used in Fig.~\ref{fig:reflection-pattern-j2} results in $\min_{q_t\in\mathcal{Q}} P_D(q_t)=0.95$.

In contrast, the reflection pattern for $J_1$ comprises two narrow main lobes that are directed to the corners of the coverage area.
Furthermore, the side lobes affecting the scattered paths in Fig.~\ref{fig:reflection-pattern-j1} are significantly reduced compared to those in Fig.~\ref{fig:reflection-pattern-j2}.
Thus, less variations of the device-\gls{ris} channel are reflected to the \gls{ap} and the detection performance increases to $\min_{q_t\in\mathcal{Q}} P_D(q_t)=0.99$.

\subsection{Impact of Area Size}
\noindent As one can observe in Fig.~\ref{fig:area-impact}, the size of the coverage area is also a limiting factor for the minimum guaranteed probability of detection.
As expected, the figure shows that the detection performance decreases with increasing $D_y$ because the reflection gain of the \gls{ris} is distributed over a larger area.
Moreover, Fig.~\ref{fig:area-impact} emphasizes the large gain that the optimized phase-shift designs achieve compared to the baseline quadratic design.

\begin{figure}
  \begin{minipage}[t]{0.48\columnwidth}
    \centerline{
      \tikzsetnextfilename{var_D}
      \begin{tikzpicture}
        \begin{axis}[
          bar plot style,
          xlabel=$D_y$ / \si{\metre},
          ylabel=$\min_{q_t\in\mathcal{Q}} P_D(q_t)$,
          xmin=5,
          xmax=35,
          height=5.8cm,
          ]
        \addplot[fill,color1] table [col sep=comma] {var_D-00_00.csv};
        \addplot[fill,color2] table [col sep=comma] {var_D-00_01.csv};
        \addplot[fill,color3] table [col sep=comma] {var_D-00_02.csv};
        \addplot[fill,color4] table [col sep=comma] {var_D-00_03.csv};
        \legend{
          {$J_1(\mt{W},q_t)$},
          {$J_2(\mt{W},q_t)$},
          {$J_3(\mt{W},q_t)$},
          {quadratic}}
        \end{axis}
      \end{tikzpicture}
    }
    \caption{Minimum probability of detection as a function of area width for $K=\SI{3}{\dB}$, $P_\text{tx}=\SI{-3}{\dBm}$, and $\alpha=\SI{30}{\degree}$.}
    \label{fig:area-impact}
  \end{minipage}
  \hfill
  \begin{minipage}[t]{0.48\columnwidth}
    \centerline{
      \tikzsetnextfilename{var_U}
      \begin{tikzpicture}
        \begin{axis}[
          bar plot style,
          xlabel=$U$,
          ylabel=$\min_{q_t\in\mathcal{Q}} P_D(q_t)$,
          legend pos=north west,
          xmin=24,
          xmax=72,
          ymin=0.7,
          ymax=1.05,
          height=5.8cm,
          ]
        \addplot[fill,color1] table [col sep=comma] {var_U-00_00.csv};
        \addplot[fill,color2] table [col sep=comma] {var_U-00_01.csv};
        \addplot[fill,color3] table [col sep=comma] {var_U-00_02.csv};
        \addplot[fill,color4] table [col sep=comma] {var_U-00_03.csv};
        \legend{
          {$J_1(\mt{W},q_t)$},
          {$J_2(\mt{W},q_t)$},
          {$J_3(\mt{W},q_t)$},
          {quadratic}}
        \end{axis}
      \end{tikzpicture}
    }
    \caption{Minimum probability of detection as a function of \gls{ris} cells for $K=\SI{3}{\dB}$, $P_\text{tx}=\SI{-3}{\dBm}$, $D_y=\SI{20}{\metre}$, and $\alpha=\SI{30}{\degree}$.}
    \label{fig:irs-impact}
  \end{minipage}
\end{figure}

It is worth noting that phase shifts designed based on $J_1$ and $J_2$, respectively, provide almost same the performance for $D_y=\SI{30}{\metre}$.
In this case, the scatterers are close to the corner of the coverage area, i.e., the impact of the \gls{nlos} paths cannot be minimized by $J_1$ without reducing the reflection gain of the \gls{los} paths.

\subsection{Impact of RIS Size}
\noindent Finally, we show the impact of the \gls{ris} size on the detection performance in Fig.~\ref{fig:irs-impact}.
More specifically, we show the minimum probability of detection for different numbers of unit cells $U\in\{32, 48, 64\}$.
Fig.~\ref{fig:irs-impact} indicates that the detection performance improves when the size of the \gls{ris} increases, which is expected because more signal energy can be reflected to the \gls{ap} with larger \glspl{ris}.
As $U$ changes from $32$ to $64$, we observe an improvement of at least \SI{14}{\percent} for the optimized phase-shift designs.
Moreover, the differences between the design approaches remain the same, i.e., the best performance is achieved with $J_1$, followed by $J_2$, $J_3$, and the baseline quadratic design.
The latter does not yield a large gain as $U$ increases because the phase shifts are not specifically designed to maximize the minimum probability of detection.
Consequently, the illumination of some parts of the coverage area may not be improved with larger \gls{ris} size, which limits the minimum detection performance across the coverage area.

\section{Conclusion}\label{sec:conclusion}
\noindent This paper studied device activity detection for \gls{gf} uplink transmission in \gls{ris} assisted communication systems, where the \gls{ris} was deployed to cover a specific area.
In order to optimally design the phase shifts of the \gls{ris}, we employed \gls{glrt} for device activity detection and showed that the resulting probability of detection can be expressed in terms of the Marcum Q function.
Furthermore, we formulated an optimization problem for the \gls{ris} phase shifts for maximization of the minimum probability of detection across the coverage area.
The non-convexity of the optimization problem was tackled by applying the \gls{mm} principle and using two approximations of the Marcum Q function.
Based on a lower bound of the Marcum Q function, the first approximation resulted in the best detection performance in most cases because all system parameters and the channel statistics are taken into account for phase-shift optimization.
For the second approximation, we neglected the scattered paths of the channel for phase-shift optimization and showed that a similar performance as for the first approximation is achieved when the \gls{los} paths of the channel are dominant or when the \gls{nlos} paths and \gls{los} paths of the channel share the same incident angles. In addition, the second approximation resulted in an versatile objective function for phase-shift optimization because it does not depend on system parameters such as transmit power, noise power, and probability of false alarm.
Finally, our performance evaluation revealed that maximizing the average channel gain is not a suitable phase-shift design criterion for \gls{ris} assisted device activity detection.
Although the resulting phase-shift design outperforms the baseline quadratic design, the proposed designs based on approximations of the Marcum Q function yield a significantly higher performance.

%

\bibliographystyle{IEEEtran}
\bibliography{IEEEabrv,references}

\begin{thebibliography}{10}
\providecommand{\url}[1]{#1}
\csname url@samestyle\endcsname
\providecommand{\newblock}{\relax}
\providecommand{\bibinfo}[2]{#2}
\providecommand{\BIBentrySTDinterwordspacing}{\spaceskip=0pt\relax}
\providecommand{\BIBentryALTinterwordstretchfactor}{4}
\providecommand{\BIBentryALTinterwordspacing}{\spaceskip=\fontdimen2\font plus
\BIBentryALTinterwordstretchfactor\fontdimen3\font minus
  \fontdimen4\font\relax}
\providecommand{\BIBforeignlanguage}[2]{{%
\expandafter\ifx\csname l@#1\endcsname\relax
\typeout{** WARNING: IEEEtran.bst: No hyphenation pattern has been}%
\typeout{** loaded for the language `#1'. Using the pattern for}%
\typeout{** the default language instead.}%
\else
\language=\csname l@#1\endcsname
\fi
#2}}
\providecommand{\BIBdecl}{\relax}
\BIBdecl

\bibitem{laue2021irsassistedactive}
F.~Laue, V.~Jamali, and R.~Schober, ``{IRS}-assisted active device detection,''
  in \emph{Proc. {IEEE} 22nd International Workshop on Signal Processing
  Advances in Wireless Communications ({SPAWC})}, Sep. 2021.

\bibitem{kaina2014shapingcomplexmicrowave}
N.~Kaina, M.~Dupr{\'{e}}, G.~Lerosey, and M.~Fink, ``Shaping complex microwave
  fields in reverberating media with binary tunable metasurfaces,''
  \emph{Scientific Reports}, vol.~4, no.~1, Oct. 2014.

\bibitem{renzo2019smartradioenvironments}
M.~D. Renzo, M.~Debbah, D.-T. Phan-Huy, A.~Zappone, M.-S. Alouini, C.~Yuen,
  V.~Sciancalepore, G.~C. Alexandropoulos, J.~Hoydis, H.~Gacanin, J.~de~Rosny,
  A.~Bounceur, G.~Lerosey, and M.~Fink, ``Smart radio environments empowered by
  reconfigurable {AI} meta-surfaces: an idea whose time has come,''
  \emph{{EURASIP} Journal on Wireless Communications and Networking}, vol.
  2019, no.~1, May 2019.

\bibitem{dai2020reconfigurableintelligentsurface}
L.~Dai, B.~Wang, M.~Wang, X.~Yang, J.~Tan, S.~Bi, S.~Xu, F.~Yang, Z.~Chen,
  M.~D. Renzo, C.-B. Chae, and L.~Hanzo, ``Reconfigurable intelligent
  surface-based wireless communications: Antenna design, prototyping, and
  experimental results,'' \emph{{IEEE} Access}, vol.~8, pp. 45\,913--45\,923,
  Mar. 2020.

\bibitem{pei2021risaidedwireless}
\BIBentryALTinterwordspacing
X.~Pei, H.~Yin, L.~Tan, L.~Cao, Z.~Li, K.~Wang, K.~Zhang, and E.~Björnson,
  ``Ris-aided wireless communications: Prototyping, adaptive beamforming, and
  indoor/outdoor field trials,'' Feb. 2021. [Online]. Available:
  \url{https://arxiv.org/abs/2103.00534}
\BIBentrySTDinterwordspacing

\bibitem{you2021energyefficiencyspectral}
L.~You, J.~Xiong, D.~W.~K. Ng, C.~Yuen, W.~Wang, and X.~Gao, ``Energy
  efficiency and spectral efficiency tradeoff in {RIS}-aided multiuser {MIMO}
  uplink transmission,'' \emph{{IEEE} Trans. Signal Process.}, vol.~69, pp.
  1407--1421, Dec. 2021.

\bibitem{wei2021channelestimationris}
L.~Wei, C.~Huang, G.~C. Alexandropoulos, C.~Yuen, Z.~Zhang, and M.~Debbah,
  ``Channel estimation for {RIS}-empowered multi-user {MISO} wireless
  communications,'' \emph{{IEEE} Trans. Commun.}, vol.~69, no.~6, pp.
  4144--4157, Jun. 2021.

\bibitem{liu2021intelligentreflectingsurface}
Y.~Liu, J.~Zhao, M.~Li, and Q.~Wu, ``Intelligent reflecting surface aided
  {MISO} uplink communication network: Feasibility and power minimization for
  perfect and imperfect {CSI},'' \emph{{IEEE} Trans. Commun.}, vol.~69, no.~3,
  pp. 1975--1989, Mar. 2021.

\bibitem{luo2021spatialmodulationris}
S.~Luo, P.~Yang, Y.~Che, K.~Yang, K.~Wu, K.~C. Teh, and S.~Li, ``Spatial
  modulation for {RIS}-assisted uplink communication: Joint power allocation
  and passive beamforming design,'' \emph{{IEEE} Trans. Commun.}, vol.~69,
  no.~10, pp. 7017--7031, Oct. 2021.

\bibitem{zhi2021uplinkachievablerate}
K.~Zhi, C.~Pan, H.~Ren, and K.~Wang, ``Uplink achievable rate of intelligent
  reflecting surface-aided millimeter-wave communications with low-resolution
  {ADC} and phase noise,'' \emph{{IEEE} Wireless Communications Letters},
  vol.~10, no.~3, pp. 654--658, Mar. 2021.

\bibitem{xu2021risenhancedwpcns}
Y.~Xu, Z.~Gao, Z.~Wang, C.~Huang, Z.~Yang, and C.~Yuen, ``{RIS}-enhanced
  {WPCNs}: Joint radio resource allocation and passive beamforming
  optimization,'' \emph{{IEEE} Trans. Veh. Technol.}, vol.~70, no.~8, pp.
  7980--7991, Aug. 2021.

\bibitem{you2021reconfigurableintelligentsurfaces}
L.~You, J.~Xiong, Y.~Huang, D.~W.~K. Ng, C.~Pan, W.~Wang, and X.~Gao,
  ``Reconfigurable intelligent surfaces-assisted multiuser {MIMO} uplink
  transmission with partial {CSI},'' \emph{{IEEE} Trans. Wireless Commun.},
  vol.~20, no.~9, pp. 5613--5627, Sep. 2021.

\bibitem{liu2018massiveconnectivitymassive}
L.~Liu and W.~Yu, ``Massive connectivity with massive {MIMO} -- part {I}:
  Device activity detection and channel estimation,'' \emph{{IEEE} Trans.
  Signal Process.}, vol.~66, no.~11, pp. 2933--2946, Jun. 2018.

\bibitem{ding2019sparsitylearningbased}
T.~Ding, X.~Yuan, and S.~C. Liew, ``Sparsity learning-based multiuser detection
  in grant-free massive-device multiple access,'' \emph{{IEEE} Trans. Wireless
  Commun.}, vol.~18, no.~7, pp. 3569--3582, Jul. 2019.

\bibitem{shao2020dimensionreductionbased}
X.~Shao, X.~Chen, and R.~Jia, ``A dimension reduction-based joint activity
  detection and channel estimation algorithm for massive access,'' \emph{{IEEE}
  Trans. Signal Process.}, vol.~68, pp. 420--435, Jan. 2020.

\bibitem{zhang2021jointactiveuser}
X.~Zhang, F.~Labeau, L.~Hao, and J.~Liu, ``Joint active user detection and
  channel estimation via {B}ayesian learning approaches in {MTC}
  communications,'' \emph{{IEEE} Trans. Veh. Technol.}, vol.~70, no.~6, pp.
  6222--6226, Jun. 2021.

\bibitem{jiang2021mlmapdevice}
D.~Jiang and Y.~Cui, ``{ML} and {MAP} device activity detections for grant-free
  massive access in multi-cell networks,'' \emph{{IEEE} Trans. Wireless
  Commun.}, 2021, early access.

\bibitem{guo2021sparseactivitydetection}
M.~Guo and M.~C. Gursoy, ``Sparse activity detection in intelligent reflecting
  surface assisted wireless networks,'' in \emph{Proc. {IEEE} 32nd Annual
  International Symposium on Personal, Indoor and Mobile Radio Communications
  ({PIMRC})}, Sep. 2021.

\bibitem{xia2021reconfigurableintelligentsurface}
S.~Xia, Y.~Shi, Y.~Zhou, and X.~Yuan, ``Reconfigurable intelligent surface for
  massive connectivity: Joint activity detection and channel estimation,''
  \emph{{IEEE} Trans. Signal Process.}, vol.~69, pp. 5693--5707, Oct. 2021.

\bibitem{shao2021bayesiantensorapproach}
X.~Shao, L.~Cheng, X.~Chen, C.~Huang, and D.~W.~K. Ng, ``A {B}ayesian tensor
  approach to enable {RIS} for {6G} massive unsourced random access,'' in
  \emph{Proc. {IEEE} Global Communications Conference ({GLOBECOM})}, Dec. 2021.

\bibitem{najafi2020physicsbasedmodeling}
M.~Najafi, V.~Jamali, R.~Schober, and H.~V. Poor, ``Physics-based modeling and
  scalable optimization of large intelligent reflecting surfaces,''
  \emph{{IEEE} Trans. Commun.}, vol.~69, no.~4, pp. 2673--2691, Dec. 2020.

\bibitem{popov2021experimentaldemonstrationmmwave}
\BIBentryALTinterwordspacing
V.~Popov, M.~Odit, J.-B. Gros, V.~Lenets, A.~Kumagai, M.~Fink, K.~Enomoto, and
  G.~Lerosey, ``Experimental demonstration of a mm{W}ave passive access point
  extender based on a binary reconfigurable intelligent surface,'' Jul. 2021.
  [Online]. Available: \url{https://arxiv.org/abs/2107.02087}
\BIBentrySTDinterwordspacing

\bibitem{croisfelt2022randomaccessprotocol}
\BIBentryALTinterwordspacing
V.~Croisfelt, F.~Saggese, I.~Leyva-Mayorga, R.~Kotaba, G.~Gradoni, and
  P.~Popovski, ``A random access protocol for {RIS}-aided wireless
  communications,'' 2022. [Online]. Available:
  \url{https://arxiv.org/abs/2203.03377}
\BIBentrySTDinterwordspacing

\bibitem{shen2021beamformingoptimizationirs}
H.~Shen, W.~Xu, S.~Gong, C.~Zhao, and D.~W.~K. Ng, ``Beamforming optimization
  for {IRS}-aided communications with transceiver hardware impairments,''
  \emph{{IEEE} Trans. Commun.}, vol.~69, no.~2, pp. 1214--1227, Feb. 2021.

\bibitem{xing2021achievablerateanalysis}
Z.~Xing, R.~Wang, J.~Wu, and E.~Liu, ``Achievable rate analysis and phase shift
  optimization on intelligent reflecting surface with hardware impairments,''
  \emph{{IEEE} Trans. Wireless Commun.}, vol.~20, no.~9, pp. 5514--5530, Sep.
  2021.

\bibitem{xing2021locationawarebeamforming}
Z.~Xing, R.~Wang, X.~Yuan, and J.~Wu, ``Location-aware beamforming design for
  reconfigurable intelligent surface aided communication system,'' in
  \emph{Proc. {IEEE}/{CIC} International Conference on Communications in China
  ({ICCC})}, Jul. 2021.

\bibitem{jamali2021powerefficiencyoverhead}
V.~Jamali, M.~Najafi, R.~Schober, and H.~V. Poor, ``Power efficiency, overhead,
  and complexity tradeoff of {IRS} codebook design - quadratic phase-shift
  profile,'' \emph{{IEEE} Commun. Lett.}, vol.~25, no.~6, pp. 2048--2052, Jun.
  2021.

\bibitem{wu2018intelligentreflectingsurface}
Q.~Wu and R.~Zhang, ``Intelligent reflecting surface enhanced wireless network:
  Joint active and passive beamforming design,'' in \emph{Proc. {IEEE} Global
  Communications Conference ({GLOBECOM})}, Dec. 2018.

\bibitem{wu2020towardssmartreconfigurable}
------, ``Towards smart and reconfigurable environment: Intelligent reflecting
  surface aided wireless network,'' \emph{{IEEE} Communications Magazine},
  vol.~58, no.~1, pp. 106--112, Jan. 2020.

\bibitem{wang2021receivedpowermodel}
Z.~Wang, L.~Tan, H.~Yin, K.~Wang, X.~Pei, and D.~Gesbert, ``A received power
  model for reconfigurable intelligent surface and measurement-based
  validations,'' in \emph{Proc. {IEEE} 22nd International Workshop on Signal
  Processing Advances in Wireless Communications ({SPAWC})}, Sep. 2021.

\bibitem{rao2014distributedcompressivecsit}
X.~Rao and V.~K.~N. Lau, ``Distributed compressive {CSIT} estimation and
  feedback for {FDD} multi-user massive {MIMO} systems,'' \emph{{IEEE} Trans.
  Signal Process.}, vol.~62, no.~12, pp. 3261--3271, Jun. 2014.

\bibitem{zhi2021reconfigurableintelligentsurface}
K.~Zhi, C.~Pan, H.~Ren, K.~Wang, and M.~Elkashlan, ``Reconfigurable intelligent
  surface-aided {MISO} systems with statistical {CSI}: Channel estimation,
  analysis and optimization,'' in \emph{Proc. {IEEE} 22nd International
  Workshop on Signal Processing Advances in Wireless Communications ({SPAWC})},
  Sep. 2021.

\bibitem{yu2020designanalysisoptimization}
G.~Yu, X.~Chen, C.~Zhong, D.~W.~K. Ng, and Z.~Zhang, ``Design, analysis, and
  optimization of a large intelligent reflecting surface-aided b5g cellular
  internet of things,'' \emph{{IEEE} Internet of Things Journal}, vol.~7,
  no.~9, pp. 8902--8916, Sep. 2020.

\bibitem{xu2020resourceallocationirs}
D.~Xu, X.~Yu, Y.~Sun, D.~W.~K. Ng, and R.~Schober, ``Resource allocation for
  {IRS}-assisted full-duplex cognitive radio systems,'' \emph{{IEEE} Trans.
  Commun.}, vol.~68, no.~12, pp. 7376--7394, Dec. 2020.

\bibitem{wang2019superresolutionchannel}
Y.~Wang, Y.~Zhang, Z.~Tian, G.~Leus, and G.~Zhang, ``Super-resolution channel
  estimation for arbitrary arrays in hybrid millimeter-wave massive {MIMO}
  systems,'' \emph{{IEEE} Journal of Selected Topics in Signal Processing},
  vol.~13, no.~5, pp. 947--960, Sep. 2019.

\bibitem{kay1998fundamentalsstatisticalsignal}
S.~M. Kay, \emph{Fundamentals of Statistical Signal Processing, Volume II:
  Detection Theory}.\hskip 1em plus 0.5em minus 0.4em\relax Prentice Hall,
  1998.

\bibitem{kay1993fundamentalsstatisticalprocessing}
------, \emph{Fundamentals of Statistical Processing, Volume I}.\hskip 1em plus
  0.5em minus 0.4em\relax Prentice Hall, 1993.

\bibitem{proakis2008digital}
J.~Proakis, \emph{Digital {C}ommunications}.\hskip 1em plus 0.5em minus
  0.4em\relax McGraw-Hill, 2008.

\bibitem{kapinas2009monotonicitygeneralizedmarcum}
V.~M. Kapinas, S.~K. Mihos, and G.~K. Karagiannidis, ``On the monotonicity of
  the generalized marcum and nuttall {Q}-functions,'' \emph{{IEEE} Trans. Inf.
  Theory}, vol.~55, no.~8, pp. 3701--3710, Aug. 2009.

\bibitem{sun2010monotonicitylogconcavity}
Y.~Sun, {\'{A}}.~Baricz, and S.~Zhou, ``On the monotonicity, log-concavity, and
  tight bounds of the generalized marcum and nuttall {Q}-functions,''
  \emph{{IEEE} Trans. Inf. Theory}, vol.~56, no.~3, pp. 1166--1186, Mar. 2010.

\bibitem{baricz2009newboundsgeneralized}
{\'{A}}.~Baricz and Y.~Sun, ``New bounds for the generalized marcum
  {Q}-function,'' \emph{{IEEE} Trans. Inf. Theory}, vol.~55, no.~7, pp.
  3091--3100, Jul. 2009.

\bibitem{annamalai2015newexponentialtype}
A.~Annamalai, ``New exponential-type integral representations of the
  generalized marcum {Q}-function of real-order with applications,'' in
  \emph{Proc. {IEEE} 12th Malaysia International Conference on Communications
  ({MICC})}, Nov. 2015.

\bibitem{annamalai2008simpleexponentialintegral}
A.~Annamalai and C.~Tellambura, ``A simple exponential integral representation
  of the generalized marcum {Q}-function for real-order m with applications,''
  in \emph{Proc. {IEEE} Military Communications Conference ({MILCOM})}, Nov.
  2008.

\bibitem{yu2020robustsecurewireless}
X.~Yu, D.~Xu, Y.~Sun, D.~W.~K. Ng, and R.~Schober, ``Robust and secure wireless
  communications via intelligent reflecting surfaces,'' \emph{{IEEE} J. Sel.
  Areas Commun.}, vol.~38, no.~11, pp. 2637--2652, Nov. 2020.

\bibitem{sun2017majorizationminimizationalgorithms}
Y.~Sun, P.~Babu, and D.~P. Palomar, ``Majorization-minimization algorithms in
  signal processing, communications, and machine learning,'' \emph{{IEEE}
  Trans. Signal Process.}, vol.~65, no.~3, pp. 794--816, Feb. 2017.

\bibitem{boyd2004convexoptimization}
S.~Boyd and L.~Vandenberghe, \emph{Convex {O}ptimization}.\hskip 1em plus 0.5em
  minus 0.4em\relax Cambridge University Press, 2004.

\bibitem{yang2019federatedlearningbased}
K.~Yang, T.~Jiang, Y.~Shi, and Z.~Ding, ``Federated learning based on
  over-the-air computation,'' in \emph{Proc. {IEEE} International Conference on
  Communications ({ICC})}, May 2019.

\bibitem{luo2010semidefiniterelaxationquadratic}
Z.~Luo, W.~Ma, A.~So, Y.~Ye, and S.~Zhang, ``Semidefinite relaxation of
  quadratic optimization problems,'' \emph{{IEEE} Signal Process. Mag.},
  vol.~27, no.~3, pp. 20--34, May 2010.

\bibitem{liu2020matrixcalibrationbased}
H.~Liu, X.~Yuan, and Y.-J.~A. Zhang, ``Matrix-calibration-based cascaded
  channel estimation for reconfigurable intelligent surface assisted multiuser
  {MIMO},'' \emph{{IEEE} J. Sel. Areas Commun.}, vol.~38, no.~11, pp.
  2621--2636, Nov. 2020.

\bibitem{mursia2021rismareconfigurableintelligent}
P.~Mursia, V.~Sciancalepore, A.~Garcia-Saavedra, L.~Cottatellucci, X.~C. Perez,
  and D.~Gesbert, ``{RISMA}: Reconfigurable intelligent surfaces enabling
  beamforming for {IoT} massive access,'' \emph{{IEEE} J. Sel. Areas Commun.},
  vol.~39, no.~4, pp. 1072--1085, Apr. 2021.

\end{thebibliography}

\end{document}